\documentclass{article}
\usepackage[utf8]{inputenc}
\usepackage{url} %show url

%%%%%%%% graph package%%%%%%% 
\usepackage{graphicx}
%\graphicspath{ {./Figures/} } 
%\usepackage{epstopdf}
%\epstopdfDeclareGraphicsRule{.eps}{pdf}{.pdf}{convert #1 %\OutputFile}
%\DeclareGraphicsExtensions{ .pdf, .eps}
  
\usepackage{amsmath}
\usepackage{amsthm}
\usepackage[english]{babel}
\usepackage{caption}
\usepackage{subcaption}
\usepackage[table]{xcolor}
\usepackage[ruled,vlined]{algorithm2e}
\usepackage{tablefootnote}
\usepackage{amssymb}
\usepackage{tikz} %draw flowcharts
\usetikzlibrary{shapes, arrows, calc, arrows.meta, fit, positioning} 
% these are the parameters passed to the library to create the node graphs  
\tikzset{  
    -Latex,auto,node distance =1.5 cm and 1.3 cm, thick,% node distance is the distance between one node to other, where 1.5cm is the length of the edge between the nodes  
    state/.style ={ellipse, draw, minimum width = 0.9 cm}, % the minimum width is the width of the ellipse, which is the size of the shape of vertex in the node graph  
    point/.style = {circle, draw, inner sep=0.18cm, fill, node contents={}},  
    bidirected/.style={Latex-Latex,dashed}, % it is the edge having two directions  
    el/.style = {inner sep=2.5pt, align=right, sloped}  
}  

\newtheorem{remark}{\bf Remark}[section]
\newtheorem{definition}{Definition}[section]
\newtheorem{theorem}{Theorem}[section]
\newcommand{\hf}{{\textstyle{{\frac{1}{2}}}}}

\newcommand{\bnu}{\mbox{\boldmath$\nu$}}
\newcommand{\bh}{\mbox{\boldmath$h$}}

\newcommand{\bpsi}{\mbox{\boldmath$\psi$}}
\newcommand{\bomega}{\mbox{\boldmath$\omega$}}
\newcommand{\bchi}{\mbox{\boldmath$\chi$}}

\DeclareUnicodeCharacter{2212}{-}

\DeclareMathOperator{\R}{\mathbb{R}}
\DeclareMathOperator{\C}{\mathbb{C}}

\definecolor{brilliantrose}{rgb}{1.0, 0.33, 0.64}
\definecolor{myviolet}{rgb}{0.21, 0.0, 0.85}
\definecolor{amethyst}{rgb}{0.6, 0.4, 0.8}
\definecolor{carrotorange}{rgb}{0.93, 0.57, 0.13}

\newenvironment{HGL}{\color{black}}{\ignorespacesafterend}

\author{
  Xue Gong
  \thanks{%
           School of Mathematics,
           University of Edinburgh,
           Edinburgh, EH9 3FD, UK; The Maxwell Institute for Mathematical                Sciences, Edinburgh, EH8 9BT, UK
           (\texttt{s1998345@ed.ac.uk})
           }
           \and
        Desmond John Higham%
        %\texttt{email} \\
            \thanks{%
           School of Mathematics,
           University of Edinburgh,
           Edinburgh, EH9 3FD, UK
           (\texttt{d.j.higham@ed.ac.uk})
           }
           \and
           Konstantinos Zygalakis
            \thanks{%
           School of Mathematics,
           University of Edinburgh,
           Edinburgh, EH9 3FD, UK
           (\texttt{K.Zygalakis@ed.ac.uk})
           }
        }

\date{}
\title{Directed Network Laplacians and Random Graph Models} 

\begin{document}
\maketitle
\begin{abstract}
    We consider spectral methods that uncover hidden structures in directed networks. \begin{HGL}We establish and exploit connections
between node reordering via (a) minimizing an objective function
and (b) maximizing the likelihood of a random graph model. \end{HGL}We focus on two existing spectral approaches that build and analyse Laplacian-style matrices via the minimization of frustration and trophic incoherence. These algorithms aim to reveal directed periodic and linear hierarchies, respectively. We show that reordering nodes using the two algorithms, or mapping them onto a specified lattice, is associated with new classes of directed random graph models.  Using this random graph setting, we are able to compare the two algorithms on a given network and quantify which structure is more likely to be present. We illustrate the approach on synthetic and real networks, and discuss practical implementation issues.
\end{abstract}

\section{Motivation}
\label{sec:mot}
Uncovering structure by clustering or reordering nodes is an important and widely studied topic
in network science \cite{LuxburgUlrike2007Atos,strang2019linear}. 
The issue is especially challenging
if we move from undirected to directed networks, 
because there is a greater variety of possible structures. \begin{HGL}
For example, even a simple motif of three connected nodes
has 13 distinct forms \cite[Figure 1(a)]{BGL16}. \end{HGL}Moreover, when spectral methods are employed, 
directed edges lead to asymmetric eigenproblems
\cite{CucuringuMihai2019Hmfc, FanuelMichael2017Mefc, MacKayR.S2020Hdia, MALLIAROS201395}.
Our objective in this work is to study spectral 
(Laplacian-based) methods for directed networks that aim to reveal 
\emph{clustered, directed, hierarchical structure}; that is, groups of nodes that are related because, when visualized appropriately, one group is seen to have links that are directed towards the next group. This hierarchy may be periodic or linear, depending on whether there are well-defined start and end groups. Figures~\ref{fig:periodic_hierarchy} and \ref{fig:linear_hierarchy} illustrate the two cases. \begin{HGL}Mapping a network to a linear structure may help us understand the upstreamness and downstreamness of nodes, which is useful, for example,
in the study of cascading effects such as social or financial contagion
\cite{SJM21}. 
Similarly, periodic hierarchies have been associated with sustainability
and risk management issues 
in commerce 
\cite{CAK21}, and also 
with the existence of echo chambers in online social media \cite{Jasny2019}. \end{HGL}

\begin{figure}
\begin{subfigure}{.47\linewidth}
\includegraphics[width=\textwidth]{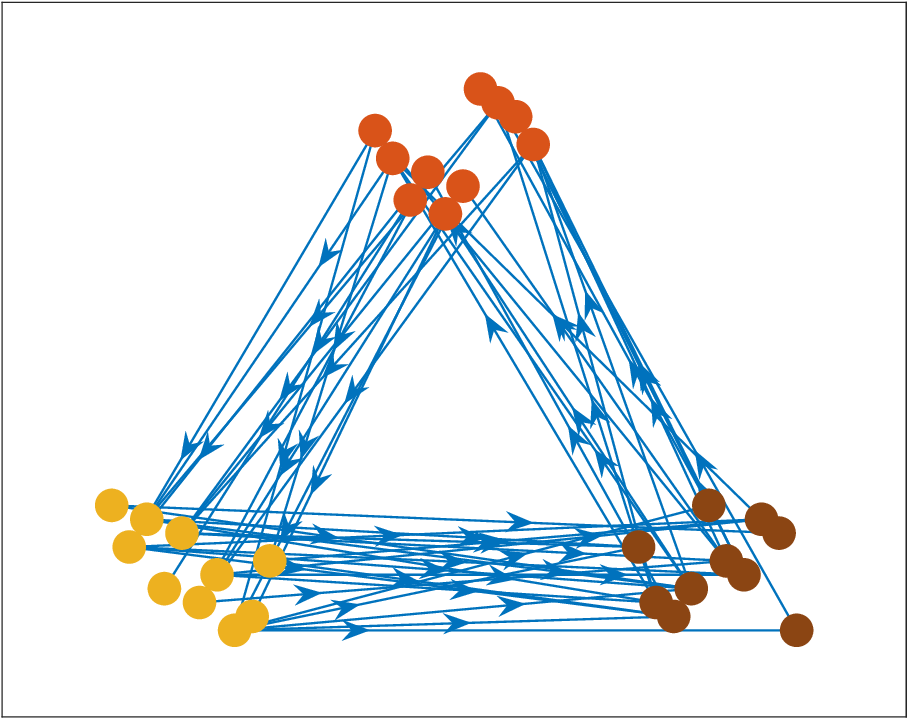}
\caption{} \label{fig:periodic_hierarchy}
\end{subfigure}
\hfill
\begin{subfigure}{.47\linewidth}
\includegraphics[width=\textwidth]{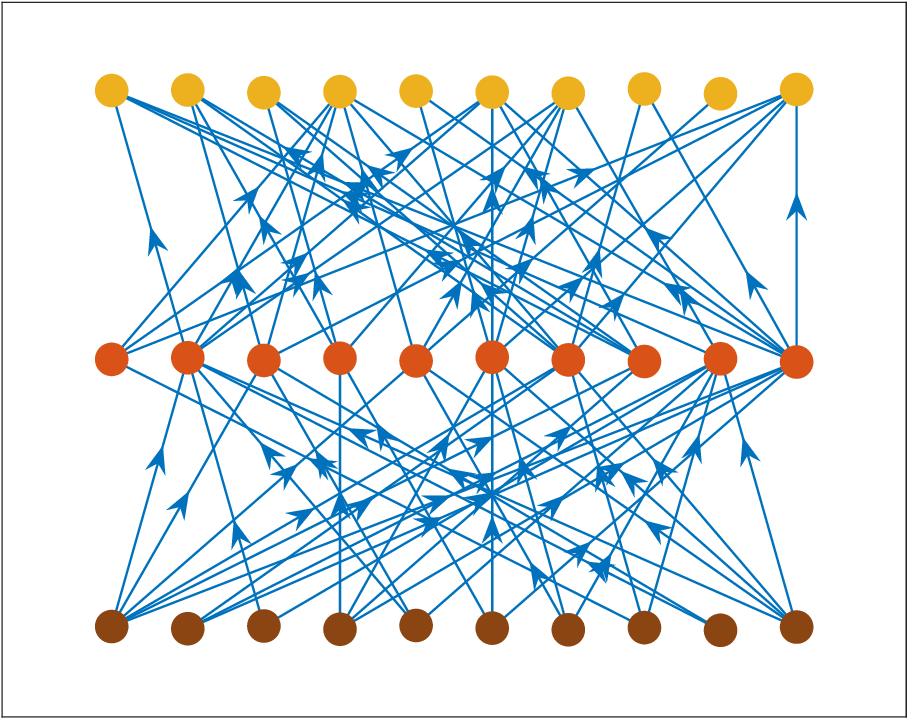}
\caption{}
\label{fig:linear_hierarchy}
\end{subfigure}
 \caption{Directed networks with (a) periodic hierarchy (edges 
  point from nodes in one cluster to nodes in the next cluster, counterclockwise) and (b) linear hierarchy (edges point from nodes in one level to nodes in the next highest level). \begin{HGL}Node colours indicate the three clusters. \end{HGL}}
\end{figure}
Of course, on real data these structures may not be so pronounced;
hence in addition to visualizing the reordered 
network, we are interested in quantifying the 
relative strength of each type of signal.
Laplacian-based methods
are often  motivated from 
the viewpoint of optimizing an
objective function.
This work focuses on two such methods.
Minimizing 
\emph{frustration} 
 leads to 
 the \emph{Magnetic Laplacian} which
 may be used to reveal periodic hierarchy \cite{FanuelMichael2018MEft, FanuelMichael2017Mefc}.
 Minimizing \emph{trophic incoherence} 
 leads to what we call the  
 \emph{Trophic Laplacian}, which 
  may be used to reveal linear hierarchy
  \cite{MacKayR.S2020Hdia}.
  We will exploit the idea of associating  
 a spectral method with a generative random graph model.
 This in turn allows us to compare the outputs from spectral methods based on the likelihood of the associated random graph. This connection was proposed in \cite{HighamDesmondJ2003Uswn}
 to show that the standard spectral method for undirected networks is equivalent to maximum likelihood optimization assuming a class of range-dependent random graphs (RDRGs) introduced in \cite{GrindrodPeter2002Rrga}.  The idea was further pursued in \cite{GrindrodPeter2010Pr}, where a likelihood ratio test was developed to determine whether a
 network with RDRG structure is more linear or periodic.

The main contributions of this work are as follows.
\begin{itemize}
    \item We propose \begin{HGL}two new directed random graphs models. One
    model \end{HGL}has the unusual property that the probability of an 
       $i \to j$ connection is not independent of the probability of the reciprocated $j \to i$ connection.
    \item \begin{HGL}
    We establish connections between these random graph models and 
     algorithms from 
      \cite{FanuelMichael2018MEft} and 
        \cite{MacKayR.S2020Hdia} 
       that use 
     the Magnetic Laplacian and Trophic Laplacian,
      respectively, 
      by showing that reordering nodes or mapping them onto a specific lattice structure using these algorithms is equivalent to maximizing the likelihood that the network is generated by the models proposed. \end{HGL}
    \item We show that by calibrating a given network to both models, it is possible to quantify the relative presence of periodic and linear hierarchical structures using a likelihood ratio.
    \item We illustrate the approach on synthetic and real networks.
\end{itemize}

The rest of the manuscript is organised as follows. In the next section, we introduce the 
Magnetic and Trophic Laplacian 
algorithms.
Section~\ref{sec:rg} defines the new classes of 
random directed graphs and establishes their connection to 
these spectral methods.
Illustrative numerical results on synthetic networks are
given in Section~\ref{sec:synth}, and in 
Section~\ref{sec:real} we show results on real networks from a range of applications areas.
We finish with a brief discussion in Section~\ref{sec:discussion}. 

\section{Magnetic and Trophic Laplacians}
\label{sec:laps}

\subsection{Notation}

We consider an unweighted directed graph $G = (V, E)$ with node set $V$ and edge set $E$, with no self-loops. The adjacency matrix $A$ is $n\times n$ with $A_{ij}= 1$ if the edge $i \to j$ is in $E$,  and $A_{ij}= 0$ otherwise.  It is convenient to define the symmetrized  adjacency matrix $W^{(s)} = (A+A^T)/2$. The symmetrized degree matrix $D$ is diagonal with $D_{ii} = d_i$, where $d_i = \sum_j W^{(s)}_{ij}$ is the average of the in-degree and out-degree of node $i$. Later, we will consider weighted networks for which each edge 
$i \to j$ has associated with it a non-negative weight $w_{ij}$. In this case, we let $A_{ij}= w_{ij}$. We use $\boldsymbol{\mathrm{i}}$ to denote $\sqrt{-1}$, and we write $\boldsymbol{x}^H$ to denote the conjugate transpose of a vector $\boldsymbol{x} \in \C^n$. We use $\mathcal{P}$ to denote the set of all permutation vectors, that is, all vectors in $\R^n$ with distinct components given by the integers $1,2,\ldots,n$.

\subsection{Spectral Methods for Directed Networks}
\label{subsec:spec}
Spectral methods explore properties of graphs through the eigenvalues and eigenvectors of associated matrices  \cite{ChungFanR.K.1997Sgt,HighamDesmondJ2007bio,LuxburgUlrike2007Atos,strang2019linear}.  
In the undirected case, the standard graph Laplacian $L = D - A$ is widely-used for clustering and reordering, along with normalized variants.
The directed case has received less attention; however, several extensions of the standard Laplacian have been proposed \cite{MALLIAROS201395}. We focus on two spectral methods for directed networks, which are discussed in the next two subsections: the Magnetic Laplacian algorithm, which reveals periodic flow structures \cite{FanuelMichael2018MEft,FanuelMichael2017Mefc}, and the Trophic Laplacian algorithm, which reveals linear hierarchical structures \cite{MacKayR.S2020Hdia}. \begin{HGL}
We choose to study these two algorithms because they have an optimization formulation and, as we show in section~\ref{sec:rg},
may be interpreted in terms of random graph models. \end{HGL} Here we briefly mention 
two other related techniques \begin{HGL} that do not fit naturally into this framework\end{HGL}. The Hermitian matrix method groups nodes into clusters with a strong imbalance of flow between clusters \cite{CucuringuMihai2019Hmfc}. This approach constructs a skew-symmetric matrix that emphasizes net flow between pairs of nodes but ignores reciprocal edges. A spectral clustering algorithm motivated by random walks was derived in \cite{palmer2020spectral} leading to a graph Laplacian for directed networks that was proposed earlier in \cite{ChungFan2005LatC}. 

\subsection{The Magnetic Laplacian}
\label{sec:ML}

Given a network and a vector of angles  $\boldsymbol{\theta} = (\theta_1, \theta_2, ..., \theta_n)^T $ in $[0,2\pi)$, we may define the corresponding \emph{frustration}
\begin{equation} 
    \eta(\boldsymbol{\theta}) = \sum_{i,j} W_{ij}^{(s)}| e^{\boldsymbol{\mathrm{i}} \theta_i }-e^{\boldsymbol{\mathrm{i}} \delta_{ij} }e^{\boldsymbol{\mathrm{i}} \theta_j}  |^2,
    \label{eq:frustration}
\end{equation}
where $\delta_{ij} = -2\pi g  \alpha_{ij}$ with $g \in [0, \frac{1}{2}]$. Here 
$\alpha_{ij} = 0$ if the edge between $i$ and $j$ is reciprocated,
that is  $A_{ij} = A_{ji} = 1$;
$\alpha_{ij} = 1$ if the edge $i \rightarrow j$ is unreciprocated,
that is $A_{ij} = 1$ and $A_{ji} = 0$;
and 
$\alpha_{ij} = -1$ if the edge $j \rightarrow i$ is unreciprocated,
that is $A_{ij} = 0$ and $A_{ji} = 1$.
For convenience we also set $\alpha_{ij} = 0$ if 
$i$ and $j$ are not connected. 
To understand the definition (\ref{eq:frustration}), 
suppose that for a given graph we wish to choose angles that
produce low frustration. Each term $W_{ij}^{(s)}| e^{\boldsymbol{\mathrm{i}} \theta_i }-e^{\boldsymbol{\mathrm{i}} \delta_{ij} }e^{\boldsymbol{\mathrm{i}} \theta_j}  |^2$ in (\ref{eq:frustration}) can make a positive contribution to the frustration if $W_{ij}^{(s)} \neq 0$; that is, if $i$ and $j$ are involved in at least one edge. In this case, if there is an edge from $i$ to $j$ that is not reciprocated, then we can force this term to be zero by choosing $\theta_j = \theta_i + 2 \pi g$.
%Similarly, if there is an edge from $j$ to $i$ that is not reciprocated then we can force the term to be zero by choosing
%$\theta_i = \theta_j + 2 \pi g$.
If the edge is reciprocated, then we can force the term to be zero by choosing $\theta_j = \theta_i$. Hence, intuitively, choosing angles to minimize the frustration can be viewed as mapping the nodes into directed clusters
on the unit circle in such a way that (a) nodes in the same cluster tend to have reciprocated connections, 
and (b) unreciprocated edges tend to point from source nodes in one cluster to target nodes in the next cluster, periodically.
Setting the parameter $g = 1/k$ for some positive integer $k$ indicates that we are looking for $k$ directed clusters.

%We note that, since $\delta_{ji} = - \delta_{ij}$,
%$W_{ij}^{(s)} = W_{ji}^{(s)}$ for $i \neq j$,
%and $W_{ii}^{(s)} = 0$,
%we may express
%$\eta(\boldsymbol{\theta})$ as a sum over ordered pairs with a factor of $2$ appearing:
%\begin{equation}
%\eta(\boldsymbol{\theta}) = 2 \sum_{i<j} W_{ij}^{(s)}\left|e^{\mathrm{i}\theta_i} -e^{\mathrm{i}\delta_{ij}}e^{\mathrm{i}\theta_j}\right|^2.
%\label{eq:frus2}
%\end{equation}
%Indeed, in 
%\cite{FanuelMichael2018MEft}
%the definition of 
%frustration differs from (\ref{eq:frustration}) 
%by having a factor of $2$ in the denominator, along with a 
%further normalization by 
%$\sum_{ij} W^{[s]}_{ij}$. This rescaling does not affect any
%of the arguments that we present, and hence we omit it for simplicity. 

On a real network it is unlikely that the frustration (\ref{eq:frustration}) can be reduced to zero, but it is of interest to find a set of angles that give a minimum value. This minimization problem is closely related to the \emph{angular synchronization} problem \cite{cucuringu2020extension,SingerA2011Asbe}, which estimates angles from noisy measurements of their phase differences $\theta_i - \theta_j \mod{2\pi}$.   
Moreover, we note that for visualization purposes it 
makes sense to reorder the rows and columns 
of the adjacency matrix based on the set of angles
that minimizes the frustration.  We also note that in \cite{FanuelMichael2018MEft} the expression in (\ref{eq:frustration}) for the frustration is normalized through a division by $2 \sum_i d_i$. This is immaterial for our purposes, since that denominator is independent of the choice of $\boldsymbol{\theta}$.

The frustration (\ref{eq:frustration}) is connected to the 
Magnetic Laplacian, which is defined as follows, where $A \circ B$ denotes the elementwise, or Hadamard, product between matrices of the same dimension; that is, 
$(A \circ B)_{ij} = A_{ij} B_{ij}$. 

\begin{definition}
Given $g \in [0, \frac{1}{2}]$,
the \textbf{Magnetic Laplacian} $L^{(g)}$ \cite{FanuelMichael2018MEft,FanuelMichael2017Mefc} is defined as \begin{equation*}
    L^{(g)} = D - T^{(g)} \circ  W^{(s)},
\end{equation*}
where   
$T^{(g)}_{ij} =e^{\mathrm{i}\delta_{ij}}$.
Here, the 
\emph{transporter matrix} $T^{(g)}$
assigns a rotation to each edge according to its direction.
\end{definition}
It is straightforward to show that $L^{(g)}$ is a Hermitian matrix. When $g = 0$ and the graph is undirected, the Magnetic Laplacian reduces to the standard graph Laplacian.

The following result, which is implicit in \cite{FanuelMichael2018MEft,FanuelMichael2017Mefc},
shows that the frustration (\ref{eq:frustration})  may be written as a quadratic form involving the Magnetic Laplacian.

\begin{theorem} 
Let $\bpsi \in \C^n$ 
be such that $\psi_j = e^{\mathrm{i}\theta_j}$,
then
\begin{equation} 
\bpsi^{H}L^{(g)}\bpsi =
\hf \sum_{i,j} W_{ij}^{(s)}| e^{\mathrm{i} \theta_i }-e^{\mathrm{i} \delta_{ij} }e^{\mathrm{i} \theta_j}  |^2.
\label{eq:qf}
\end{equation}
\label{theorem: ML_2sum}
\end{theorem}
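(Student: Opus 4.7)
The plan is to expand the quadratic form $\bpsi^{H}L^{(g)}\bpsi$ using the decomposition $L^{(g)} = D - T^{(g)}\circ W^{(s)}$ and match it term-by-term with the expansion of the squared modulus on the right-hand side.

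First I would write
\begin{equation*}
\bpsi^{H}L^{(g)}\bpsi \;=\; \sum_i d_i |\psi_i|^2 \;-\; \sum_{i,j} \overline{\psi_i}\, T^{(g)}_{ij} W^{(s)}_{ij}\, \psi_j.
\end{equation*}
Since $|\psi_i|=1$ and $d_i = \sum_j W^{(s)}_{ij}$, the diagonal contribution collapses to $\sum_{i,j} W^{(s)}_{ij}$. Substituting $\psi_j = e^{\mathrm{i}\theta_j}$ and $T^{(g)}_{ij} = e^{\mathrm{i}\delta_{ij}}$, this gives
\begin{equation*}
\bpsi^{H}L^{(g)}\bpsi \;=\; \sum_{i,j} W^{(s)}_{ij} \;-\; \sum_{i,j} W^{(s)}_{ij}\, e^{-\mathrm{i}\theta_i}\, e^{\mathrm{i}\delta_{ij}}\, e^{\mathrm{i}\theta_j}.
\end{equation*}

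Next I would expand the right-hand side of \eqref{eq:qf} using $|a-b|^2 = |a|^2 + |b|^2 - 2\,\mathrm{Re}(\bar a b)$, exploiting again that $|e^{\mathrm{i}\theta_i}|=|e^{\mathrm{i}\delta_{ij}}e^{\mathrm{i}\theta_j}|=1$:
\begin{equation*}
\tfrac{1}{2}\sum_{i,j} W^{(s)}_{ij}\, |e^{\mathrm{i}\theta_i} - e^{\mathrm{i}\delta_{ij}}e^{\mathrm{i}\theta_j}|^2 \;=\; \sum_{i,j} W^{(s)}_{ij} \;-\; \sum_{i,j} W^{(s)}_{ij}\, \mathrm{Re}\!\left( e^{-\mathrm{i}\theta_i}\, e^{\mathrm{i}\delta_{ij}}\, e^{\mathrm{i}\theta_j} \right).
\end{equation*}
The two expressions agree provided the complex sum in the first display is in fact real, so that the $\mathrm{Re}$ in the second display is redundant.

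The main (and only non-routine) step is this reality check, which is where the definition of $\alpha_{ij}$ does its work. From the case analysis in the definition, $\alpha_{ij} = -\alpha_{ji}$, hence $\delta_{ij} = -\delta_{ji}$, so $T^{(g)}$ is Hermitian; combined with the symmetry $W^{(s)}_{ij} = W^{(s)}_{ji}$, the conjugate of the sum $S := \sum_{i,j} W^{(s)}_{ij}\, e^{-\mathrm{i}\theta_i} e^{\mathrm{i}\delta_{ij}} e^{\mathrm{i}\theta_j}$ is seen, after relabelling $i \leftrightarrow j$, to equal $S$ itself. Thus $S \in \R$, the $\mathrm{Re}$ can be dropped, and the two sides coincide. (Equivalently, one may first observe that $L^{(g)}$ is Hermitian, so $\bpsi^{H}L^{(g)}\bpsi$ is automatically real and one may take real parts of both sides without loss.) I expect the sign-convention verification $\delta_{ij} = -\delta_{ji}$ to be the only place requiring care; the remainder is an algebraic unpacking.
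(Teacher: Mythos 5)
Your proof is correct. The paper does not actually supply a proof of this theorem (it is stated as implicit in the cited works on magnetic eigenmaps), but your argument is the standard and surely intended one: expand the quadratic form, use $|\psi_i|=1$ to collapse the degree term, and observe that the cross term is real because $W^{(s)}$ is symmetric and $\delta_{ij}=-\delta_{ji}$ makes $T^{(g)}$ Hermitian. You correctly isolate that antisymmetry as the one point needing verification; the paper itself relies on exactly the same facts ($\delta_{ji}=-\delta_{ij}$, $W^{(s)}_{ij}=W^{(s)}_{ji}$, $W^{(s)}_{ii}=0$) at the start of its proof of Theorem~\ref{theorem_pRDRG}.
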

Appealing to the Rayleigh-Ritz theorem \cite{LutkepohlHelmut1996Hom}
the quadratic form on the left hand side of 
(\ref{eq:qf}) is minimized 
over all $\bpsi \in \C^n$ with $\| \bpsi \|_2 = 1$
by taking $\bpsi$ to be an 
eigenvector 
corresponding to the smallest eigenvalue of 
the Magnetic Laplacian.
Now, such an eigenvector
will not generally be proportional to  
a vector with components of the form 
$\{ e^{\mathrm{i}\theta_j} \}_{j=1}^{n}$.
However, a useful heuristic
is to force this relationship in a componentwise sense; that is, 
to assign to each $\theta_j$ the phase angle of 
$\psi_j$,
effectively solving a relaxed version of the desired minimization 
problem.
This leads to Algorithm~\ref{algo:magnetic_eigenmaps}
below, as used in \cite{FanuelMichael2018MEft}.

\begin{algorithm}[H]
\SetAlgoLined
\KwResult{Phase angles of nodes $\boldsymbol{\theta}$}
\textbf{Input adjacency matrix} $A$\;
\textbf{Symmetrize adjacency matrix} $W^{(s)} = (A+A^T)/2$\;
\textbf{Calculate degree matrix} $D_{ii} = d_i=\sum_j W^{(s)}_{ij}$\;
\textbf{Construct transporter} $T^{(g)}_{ij} = e^{\mathrm{i}\delta_{ij}}$\;
\textbf{Calculate Magnetic Laplacian} $L^{(g)} = D - T^{(g)} \circ  W^{(s)}$\;
\textbf{Compute eigenvectors} $\{\psi_m^{(g)}\}_{m = 1}^n = \text{Eigs}(L^{(g)})$ and associated eigenvalues\;
\textbf{Calculate phase angles $\boldsymbol{\theta}= \text{phase}(\psi_1^{(g)})$ using eigenvector $\psi_1^{(g)}$ associated with the smallest eigenvalue}\;
\textbf{Reorder nodes} with $\theta_i$ or  
\textbf{visualise} with 
%$(x_j, y_j) = 
$( \text{cos}(\theta_i),  \text{sin}(\theta_i))$
\caption{Magnetic Laplacian algorithm}
\label{algo:magnetic_eigenmaps}
\end{algorithm}

\subsection{The Trophic Laplacian}

The idea of discovering a linear directed hierarchy
arises 
in many 
contexts where edges represent 
dominance or approval, including the ranking of sports teams 
\cite{MacKay20}
and web pages \cite{Gleich15}.
A particularly well-defined case is the quantification of trophic levels in food webs, 
where each directed edge represents a  
consumer-resource relationship
\cite{johnson2020digraphs,Levine80,moutsinas2020graph}.
We focus here on the 
approach in 
 \cite{MacKayR.S2020Hdia}, where the aim is to assign a trophic level $h_i$ to each node $i$ such that along any directed edge the trophic level increases by one. This motivates the minimization of the \emph{trophic incoherence} 
\begin{equation}
 F(\boldsymbol{h}) = \frac{\sum_{i,j}A_{ij}(h_j-h_i-1)^2}{\sum_{i,j}A_{ij}}.
 \label{eq:incoherence}
\end{equation}
%where $A_{ij}$ is the weight of the edge from node $i$ to node $j$.
%Trophic incoherence measures to what degree the edges are alignment in the upward direction. 
%It is also related to the network properties such as cycles and stability. 
Denoting the total weight of node $i$ as $\omega_i = \sum_{j \in V} (A_{ji} + A_{ij}) $ and the imbalance as  $\chi_i = \sum_{j \in V} (A_{ji} -  A_{ij}) $, the trophic level vector 
$\boldsymbol{h} \in \R^n$ that minimizes the trophic incoherence solves the linear system of equations
\begin{equation}
\Lambda \boldsymbol{h} = \bchi,
\label{eq:trophic_level_equation}
\end{equation}
where  $\Lambda = \text{diag}(\bomega) - A - A^T$, and the solution to (\ref{eq:trophic_level_equation}) is unique up to a constant shift \cite{MacKayR.S2020Hdia}. Since it employs a Laplacian-style matrix,
$\Lambda$, we refer to it as the \textit{Trophic Laplacian} algorithm; see Algorithm~\ref{algo:trophic_level}.

\begin{algorithm}[H]
\SetAlgoLined
\KwResult{The trophic levels $\bh$}
\textbf{Input adjacency matrix} $A$\;
\textbf{Calculate the node weights} $\omega_i = \sum_{j} A_{ji} + \sum_j A_{ij} $\;
\textbf{Calculate the node imbalances} $\chi_i = \sum_{j} A_{ji} - \sum_j A_{ij} $\;
\textbf{Calculate the Trophic Laplacian}  $\Lambda = \text{diag}(\bomega) - A - A^T$\;
\textbf{Solve the linear system (\ref{eq:trophic_level_equation})}\;
\textbf{Reorder} or \textbf{visualize} nodes using 
$\boldsymbol{h}$
\caption{Trophic Laplacian algorithm}
\label{algo:trophic_level}
\end{algorithm}
%%%%%%%%%%%%%%%%%%%%%%%%%%%%%%%%%%%%%%%%%%%%%
\section{Random Graph Interpretation}
\label{sec:rg}

In this section, we associate two new random graph models with the Magnetic and Trophic Laplacian algorithms, 
using a similar approach to the work in \cite{HighamDesmondJ2003Uswn}. 
After establishing these connections, we proceed as in \cite{GrindrodPeter2010Pr} and propose a maximum likelihood test to compare the two models on a given network. 

\subsection{The Directed pRDRG Model}
\label{subsec:prdrg}

Given a set of phase angles
$\{ \theta_i \}_{i=1}^{n}$, we will 
define a model for 
unweighted, directed random graphs. 
The model generates connections between each pair of distinct nodes $i$ and $j$ with 
four possible outcomes---a pair of reciprocated edges, an unreciprocated edge from $i$ to $j$, 
an unreciprocated edge from $j$ to $i$, or no edges---as follows
\begin{align}
\textbf{P}(A_{ij}=1, A_{ji}=1)&=f(\theta_i, \theta_j),  \label{eq:modf}\\
    \textbf{P}(A_{ij}=1, A_{ji}=0)&=q(\theta_i, \theta_j),\label{eq:modq} \\
       \textbf{P}(A_{ij}=0, A_{ji}=1)&=l(\theta_i, \theta_j), \label{eq:modl}\\
    \textbf{P}(A_{ij}=0, A_{ji}=0)&=1-f(\theta_i, \theta_j)-q(\theta_i, \theta_j)-l(\theta_i, \theta_j) \label{eq:modfql},
  \end{align}
  where $f$, $q$ and $l$
are functions that define the model, and, of course, 
they must be chosen such that all probabilities lie between zero and one.
We emphasize that this model has a feature that distinguishes it from typical random graph models, including directed Erd\H{o}s--R{\'e}nyi and small-world style versions 
\cite{Kl2000}: 
the probability of the edge $i \to j$
is not independent of 
the probability the edge $j \to i$, in general.

We are interested here in the inverse problem where we are given
a graph and a model (\ref{eq:modf})--(\ref{eq:modfql}), and 
we wish to infer the phase angles. 
This task arises naturally when the nodes are supplied in some arbitrary order.
We will assume that the phase angles are to be assigned values
from a discrete set 
$\{ \nu_i \}_{i=1}^n$;
that is, we must set 
$\theta_i = \nu_{p_i}$, where 
$p$ is a permutation vector.
This setting includes the cases of (directed) clustering and reordering. For example, with $n=12$, we could specify 
$\nu_1 = \nu_2 = \nu_3 = 0$,
$\nu_4 = \nu_5 = \nu_6 = \pi/2$,
$\nu_7 = \nu_8 = \nu_9 = \pi$, and 
$\nu_{10} = \nu_{11} = \nu_{12} = 3\pi/2$,
in order to assign the nodes to four directed clusters of equal size.
Alternatively, $\nu_i = (i-1)2 \pi/12$ would assign the nodes 
to equally-spaced phase angles,
as shown in Figure~\ref{fig:unit_circle},
as a means to reorder the graph.
The following theorem shows that solving this type of 
inverse problem for suitable $f$, $q$ and $l$ 
is equivalent to minimizing the frustration.

\begin{figure}

\begin{subfigure}{.45\linewidth}
\begin{tikzpicture}[x = 1.5cm, y = 1.5cm]
    \draw[->] (-1.2,0)--(1.2,0);
    \draw[->] (0,-1.2)--(0,1.4);
    \foreach \x in {-1,1}
        \draw[-][shift={(\x,0)}](0pt,2pt)--(0,-2pt)node[label={183:\textit{$\x$}}]{};
    \foreach \y in {-1,1}
        \draw[-][shift={(0,\y)}](2pt,0pt)--(-2pt,0pt)node[label={95:\textit{$\y$}}]{};
    \foreach \x in {-1.4,-1.3,-1.2,-1.1,-0.9,-0.8,-0.7,-0.6,-0.5,-0.4,-0.3,-0.2,-0.1,0,0.1,0.2,0.3,0.4,0.5,0.6,0.7,0.8,0.9,1,1.1,1.2,1.3,1.4}
    \draw[color=gray] (0,0) circle (1);

    \foreach \x in {0,10,...,360}
        \draw [dashed,fill=blue] (\x:1) circle (1.5pt)node[sloped,above]{};
\end{tikzpicture}
\caption{} \label{fig:unit_circle}
\end{subfigure}
\hfill
\begin{subfigure}{.45\linewidth}
\includegraphics[width=\textwidth]{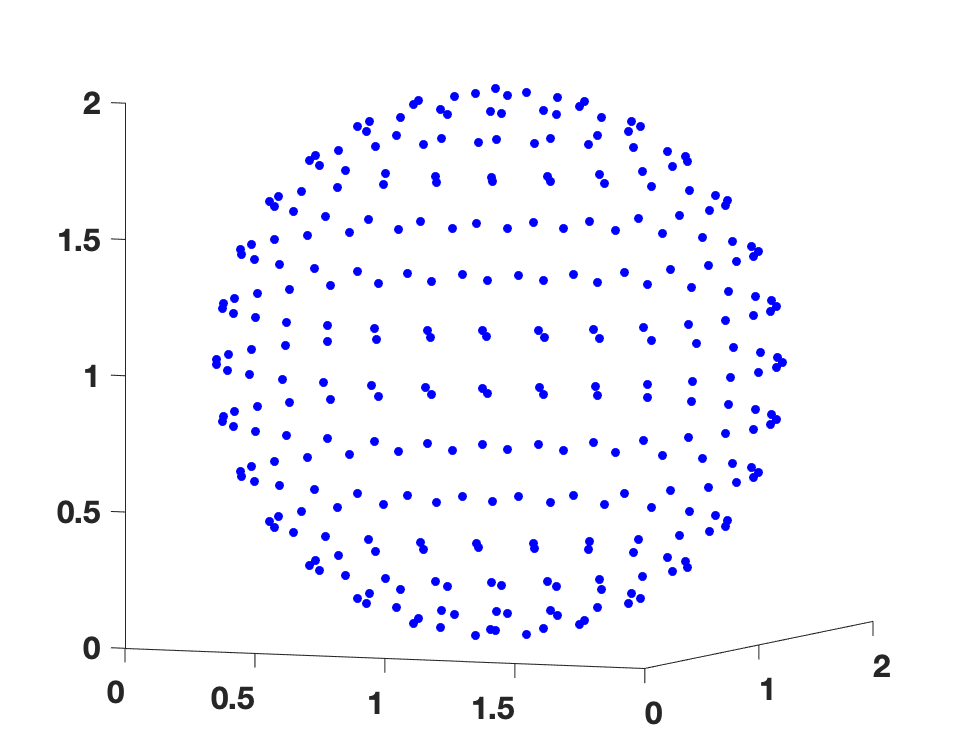}
\caption{}
\label{fig:sphere_grid}
\end{subfigure}
 \caption{(a) Points uniformly distributed on the unit circle and (b) a sphere.}
\end{figure}

\begin{theorem}
Suppose $\boldsymbol{\theta} \in \R^n$ is constrained to take values 
such that $\theta_i = \nu_{p_i}$, where 
$p$ is a permutation vector. 
Then minimizing the frustration $\eta(\boldsymbol{\theta})$
in (\ref{eq:frustration}) 
over all such $\boldsymbol{\theta}$ is equivalent to 
maximizing the likelihood that the 
graph came from a model of the form 
(\ref{eq:modf})--(\ref{eq:modfql})
in the case where 
\begin{align*}
    f(\theta_i, \theta_j)&= \frac{1}{Z_{ij}},\\
    q(\theta_i, \theta_j)&=\frac{1}{Z_{ij}}\exp[\gamma(1-2\cos \beta_{ij}+\cos(\beta_{ij}+2\pi g))],\\
      l(\theta_i, \theta_j)&=\frac{1}{Z_{ij}}\exp[\gamma(1-2\cos \beta_{ij}+\cos(\beta_{ij}-2\pi g))],\\
          \label{eq:model probability}
\end{align*}
with $\beta_{ij} = \theta_i - \theta_j$
and normalization constant 
\[ 
   Z_{ij} = 1 + e^{\gamma(1-2\cos \beta_{ij}+\cos(\beta_{ij}+2\pi g))} + e^{\gamma(1-2\cos \beta_{ij}+\cos(\beta_{ij}-2\pi g))}+e^{\gamma(2-2\cos \beta_{ij})}, 
\]
for any positive constant $\gamma$.
\label{theorem_pRDRG}
\end{theorem}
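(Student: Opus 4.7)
The plan is to write the log-likelihood as a sum over unordered pairs $\{i,j\}$ and decompose it into a piece that is independent of the permutation $p$ plus a piece proportional to $-\eta(\boldsymbol{\theta})$. The constraint $\theta_i=\nu_{p_i}$ with $\{\nu_i\}$ fixed is what makes the ``constant'' piece genuinely constant, so the argument rests on two ingredients: an algebraic case-by-case identity linking the exponents that appear in $f,q,l,1-f-q-l$ to the per-pair frustration, and a symmetry observation about sums over unordered pairs.

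First I would write
\begin{equation*}
\log \mathcal{L}(\boldsymbol{\theta};A) \;=\; \sum_{i<j}\log \textbf{P}(A_{ij},A_{ji})
\end{equation*}
and verify by direct substitution that $1-f-q-l = e^{\gamma(2-2\cos\beta_{ij})}/Z_{ij}$. Each of the four conditional log-probabilities then has the common form $-\log Z_{ij}+\gamma\,T_{ij}$, where $T_{ij}$ equals $0$, $1-2\cos\beta_{ij}+\cos(\beta_{ij}+2\pi g)$, $1-2\cos\beta_{ij}+\cos(\beta_{ij}-2\pi g)$, or $2-2\cos\beta_{ij}$ according to whether the pair $\{i,j\}$ is reciprocated, carries only the edge $i\to j$, only $j\to i$, or is disconnected.

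The central step is the algebraic identity
\begin{equation*}
T_{ij} + \hf\,\eta_{ij} \;=\; 2 - 2\cos\beta_{ij},
\end{equation*}
where $\eta_{ij}$ denotes the combined contribution of the ordered pairs $(i,j)$ and $(j,i)$ to $\eta(\boldsymbol{\theta})$. Using the expansion $|e^{\mathrm{i}\theta_i}-e^{\mathrm{i}\delta_{ij}}e^{\mathrm{i}\theta_j}|^2=2-2\cos(\delta_{ij}-\beta_{ij})$ together with the four values of $(\delta_{ij},W^{(s)}_{ij})\in\{(0,1),(-2\pi g,\hf),(2\pi g,\hf),(0,0)\}$, each case reduces to a one-line trigonometric check; the reciprocated case gives $\eta_{ij}=4-4\cos\beta_{ij}$ with $T_{ij}=0$, the unreciprocated cases give $\eta_{ij}=2-2\cos(\beta_{ij}\pm2\pi g)$ with the displayed $T_{ij}$, and the disconnected case is trivial. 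Summing over $i<j$ yields
\begin{equation*}
\log \mathcal{L} \;=\; -\sum_{i<j}\log Z_{ij} \;+\; \gamma\sum_{i<j}(2-2\cos\beta_{ij}) \;-\; \hf\gamma\,\eta(\boldsymbol{\theta}).
\end{equation*}

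Finally, I would argue that the first two sums are invariant under the permutation $p$. Both $Z_{ij}$ and $2-2\cos\beta_{ij}$ depend symmetrically on the unordered pair $\{\theta_i,\theta_j\}$, since $\cos\beta_{ij}$ is even in $\beta_{ij}$ and the two off-diagonal exponentials in $Z_{ij}$ swap under $\beta_{ij}\mapsto-\beta_{ij}$. Because $p$ acts as a bijection on the set of unordered index pairs, each sum equals the corresponding sum over the fixed multiset $\{\nu_a-\nu_b\}_{a<b}$ and is therefore independent of $p$. With $\gamma>0$, maximizing $\log\mathcal{L}$ over admissible $\boldsymbol{\theta}$ is thus equivalent to minimizing $\eta(\boldsymbol{\theta})$, as claimed. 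The main obstacle I anticipate is the bookkeeping around symmetrization---carefully distinguishing sums over ordered versus unordered pairs and reconciling the factor $\hf$ from $W^{(s)}_{ij}=\hf$ on unreciprocated edges against the doubling produced by the two orderings of a reciprocated edge.
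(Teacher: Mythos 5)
Your proposal is correct and follows essentially the same route as the paper: both reduce to a sum over unordered node pairs, isolate a permutation-invariant factor (the null-graph probability, i.e.\ the $\log Z_{ij}$ and $2-2\cos\beta_{ij}$ terms), and match the remaining exponents to the per-pair frustration contributions via the same case-by-case trigonometric checks. The only differences are cosmetic --- you verify the stated $f,q,l$ directly where the paper solves for them, and you justify the invariance of the constant part via the symmetry of $Z_{ij}$ under $\beta_{ij}\mapsto-\beta_{ij}$ slightly more explicitly than the paper does.
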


\begin{proof}
We first note that, since $\delta_{ji} = - \delta_{ij}$,
$W_{ij}^{(s)} = W_{ji}^{(s)}$ for $i \neq j$,
and $W_{ii}^{(s)} = 0$,
we may express
$\eta(\boldsymbol{\theta})$ 
(\ref{eq:frustration}) in terms of a 
sum over ordered pairs:
\begin{equation}
\hf \, \eta(\boldsymbol{\theta}) =\sum_{i<j} W_{ij}^{(s)}\left|e^{\mathrm{i}\theta_i} -e^{\mathrm{i}\delta_{ij}}e^{\mathrm{i}\theta_j}\right|^2.
\label{eq:frusb}
\end{equation}
Then, distinguishing between the three different ways in which
each $i$ and $j$ may be connected, we have 
\begin{align}
\hf \, 
\eta(\boldsymbol{\theta})  &=  \sum_{i<j:A_{ij}=1,A_{ji}=1}|e^{\mathrm{i}\theta_i}-e^{\mathrm{i}\theta_j}|^2 + \sum_{i<j:A_{ij}=1,A_{ji}=0}\hf|e^{\mathrm{i}\theta_i}-e^{-\mathrm{i}2\pi g}e^{\mathrm{i}\theta_j}|^2 \label{eq:eta1}\\
&+ \sum_{i<j:A_{ij}=0,A_{ji}=1}\hf|e^{\mathrm{i}\theta_i}-e^{\mathrm{i}2\pi g}e^{\mathrm{i}\theta_j}|^2.
\label{eq:eta2}
\end{align}

The likelihood $L$ of the graph $G$
from a model of the form 
(\ref{eq:modf})--(\ref{eq:modfql}) is given by
\begin{align*}
L(G) &= \prod_{i<j: A_{ij}=1, A_{ji}=1}f(\theta_i, \theta_j)\prod_{i<j:  A_{ij}=1, A_{ji}=0}q(\theta_i, \theta_j)\prod_{i<j:  A_{ij}=0, A_{ji}=1}l(\theta_i, \theta_j)\\
&\times \prod_{i<j: A_{ij}=0, A_{ji}=0}\left(1-f(\theta_i, \theta_j)-q(\theta_i, \theta_j)-l(\theta_i, \theta_j)\right),
\end{align*} 
which we may rewrite as 
\begin{align*}
L(G) 
&=\prod_{i<j:  A_{ij}=1,A_{ji}=1}\frac{f(\theta_i, \theta_j)}{1-f(\theta_i, \theta_j)-q(\theta_i, \theta_j)-l(\theta_i, \theta_j)}\\
&\times \prod_{i<j:  A_{ij}=1,A_{ji}=0}\frac{q(\theta_i, \theta_j)}{1-f(\theta_i, \theta_j)-q(\theta_i, \theta_j)-l(\theta_i, \theta_j)}\\
&\times \prod_{i<j:  A_{ij}=0,A_{ji}=1}\frac{l(\theta_j, \theta_i)}{1-f(\theta_i, \theta_j)-q(\theta_i, \theta_j)-l(\theta_i, \theta_j)}\\
&\times \prod_{i<j}\left(1-f(\theta_i, \theta_j)-q(\theta_i, \theta_j)-l(\theta_i, \theta_j)\right).
\end{align*} 
The final factor on the right hand side, which is the probability 
of the null graph, 
takes the same value for any 
$\boldsymbol{\theta} \in \R^n$
such that $\theta_i = \nu_{p_i}$,
since each ordered pair of arguments appears exactly once.
We may therefore ignore this factor when maximizing the likelihood.
Then, taking the logarithm and negating, we see that 
maximizing the likelihood is equivalent to
minimizing the expression 
\begin{align}
 &\sum_{i<j: A_{ij}=1, A_{ji}=1} \text{ln}\left[\frac{1-f(\theta_i, \theta_j)-q(\theta_i, \theta_j)-l(\theta_i, \theta_j)}{f(\theta_i, \theta_j)}\right] \label{eq:max_l1}
\\
&+\sum_{i<j: A_{ij}=1, A_{ji}=0} \text{ln}\left[\frac{1-f(\theta_i, \theta_j)-q(\theta_i, \theta_j)-l(\theta_i, \theta_j)}{q(\theta_i, \theta_j)}\right] \label{eq:max_l2}\\
&+\sum_{i<j: A_{ij}=0, A_{ji}=1} \text{ln}\left[\frac{1-f(\theta_i, \theta_j)-q(\theta_i, \theta_j)-l(\theta_i, \theta_j)}{l(\theta_i, \theta_j)}\right].
\label{eq:max_l3}
\end{align}

Comparing terms in (\ref{eq:max_l1})--(\ref{eq:max_l3}) and   (\ref{eq:eta1})--(\ref{eq:eta2}) we see that the two 
minimization problems are equivalent if  
\begin{align*}
\text{ln}\left[\frac{1-f(\theta_i, \theta_j)-q(\theta_i, \theta_j)-l(\theta_i, \theta_j)}{f(\theta_i, \theta_j)}\right] & = \gamma \left|e^{\mathrm{i}\theta_i} -e^{\mathrm{i}\theta_j}\right|^2 \\&= \gamma (2-2\cos (\theta_i - \theta_j)),\\
\text{ln}\left[\frac{1-f(\theta_i, \theta_j)-q(\theta_i, \theta_j)-l(\theta_i, \theta_j)}{q(\theta_i, \theta_j)}\right] & = \frac{\gamma}{2}\left|e^{\mathrm{i}\theta_i} -e^{-\mathrm{i}2\pi g}e^{\mathrm{i}\theta_j}\right|^2 \\&= \gamma (1-\cos (\theta_i - \theta_j + 2\pi g)),\\
\text{ln}\left[\frac{1-f(\theta_i, \theta_j)-q(\theta_i, \theta_j)-l(\theta_i, \theta_j)}{l(\theta_i, \theta_j)}\right] & = \frac{\gamma}{2}\left|e^{\mathrm{i}\theta_i} -e^{\mathrm{i}2\pi g}e^{\mathrm{i}\theta_j}\right|^2 \\&= \gamma (1-\cos (\theta_i - \theta_j - 2\pi g)),
\end{align*}
where we may choose any positive constant $\gamma$ since the minimization problems are scale invariant. Solving 
for $f$, $q$ and $l$ as functions of $\theta_i$ and $\theta_j$ 
we arrive at the model in the statement of the theorem. 
\end{proof}

 For the model in Theorem~\ref{theorem_pRDRG}, the probability of an edge from node $i$ to node $j$ depends on the phase difference $\beta_{ij} = \theta_i - \theta_j$, the decay rate $\gamma$, and the parameter $g$. We see that $\gamma$ determines how rapidly the edge probability varies with the phase difference. In the extreme case when $\gamma=0$, we obtain $f(\theta_i, \theta_j)=q(\theta_i, \theta_j)=l(\theta_i, \theta_j)=1/4$, and thus the model reduces to a conditional 
 Erd\H{o}s--R{\'e}nyi form. In addition, as $\gamma$  increases the graph generally becomes more sparse. This is because the likelihood of disconnection,  $\exp[2\gamma(1-\cos (\theta_i - \theta_j))]
/Z_{ij}$, is greater than or equal to that of the other cases.
 
 We note that
  having applied the Magnetic Laplacian algorithm to estimate 
 $\boldsymbol{\theta}$,
 there are two straightforward approaches to estimating 
 $\gamma$.
 One way 
 is to 
 maximize the graph likelihood 
 over $\gamma > 0$.
 Another is to choose $\gamma$ so that the expected
 edge density from the random graph model matches the 
 edge density of the given network.
 We illustrate these approaches in
 Section~\ref{sec:synth}.

%We observe that the probabilities in Theorem~\ref{theorem_pRDRG} have the form of normalized exponential,
% or \emph{softmax},
%functions. 
%The likelihood of disconnection, 
%$\exp[2\gamma(1-\cos (\theta_i - \theta_j))]
%/Z_{ij}$, is 
%greater than or equal to other cases. Therefore, when $\gamma$ increases the graph generally becomes more sparse. 

\begin{remark}
Since the edge probabilities are functions of the phase differences and have a periodicity of $2\pi$, this model resembles the \emph{periodic Range-Dependent Random Graph} (pRDRG) model in \cite{GrindrodPeter2010Pr}, 
which generates an undirected edge between $i$ and $j$ with probability $f(\min\{|j-i|,n-|j-i|\})$ for a given decay function $f$. 
We will therefore 
use the term \emph{directed periodic Range-Dependent Random Graph}
model (directed pRDRG) to describe the model
in Theorem~\ref{theorem_pRDRG}.
\end{remark}

\subsection{The Trophic Range-dependent Model}
\label{subsec:trg}

Now, 
given a set of trophic levels
$\{ h_i \}_{i=1}^{n}$, we 
define an  
unweighted, directed random graph model
where 
\begin{align}
 \textbf{P}(A_{ij}=1) &= f(h_i, h_j),  
 \label{eq:tf1}\\
 \textbf{P}(A_{ij}=0) &= 1- f(h_i, h_j),  
 \label{eq:tf2}
\end{align}
for some function $f$.
Here, the probability of an edge $i \to j$ is   independent of the probability of the edge $j \to i$.

Following our treatment of the directed pRDRG case,  
we are now interested in the inverse problem where we are given
a graph and the model (\ref{eq:tf1})--(\ref{eq:tf2}), and 
we wish to infer the trophic levels. 
We will assume that the trophic levels are to be assigned values
from a discrete set 
$\{ \nu_i \}_{i=1}^n$;
that is, we must set 
$h_i = \nu_{p_i}$, where 
$p$ is a permutation vector.
This setting includes the cases of 
assignment of nodes to 
trophic levels of specified size; for example, 
with $n=12$, we could set  
$\nu_1 = \nu_2 = \nu_3 = 1$,
$\nu_4 = \nu_5 = \nu_6 = 2$,
$\nu_7 = \nu_8 = \nu_9 = 3$, and 
$\nu_{10} = \nu_{11} = \nu_{12} = 4$,
in order to assign the nodes to four equal levels.
Alternatively, $\nu_i = i$ would assign each node
to its own level, which is equivalent to reordering the nodes.
The following theorem shows that solving this type of 
inverse problem for suitable $f$ 
is equivalent to minimizing the trophic incoherence.

\begin{theorem}
Suppose $\boldsymbol{h} \in \R^n$ is constrained to take values 
such that $h_i = \nu_{p_i}$, where 
$p$ is a permutation vector. 
Then minimizing the trophic incoherence $F(\boldsymbol{h})$
in (\ref{eq:incoherence}) 
over all such $\boldsymbol{h}$ is equivalent to 
maximizing the likelihood that the 
graph came from a model of the form 
(\ref{eq:tf1})--(\ref{eq:tf2})
in the case where 
\begin{equation*}
f(h_i,h_j) = \frac{1}{1+e^{\gamma(h_j-h_i-1)^2}}
\end{equation*}
for any positive $\gamma$.
\label{theorem_trophic_level}
\end{theorem}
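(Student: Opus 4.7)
The plan is to follow the same strategy used in the proof of Theorem~\ref{theorem_pRDRG}: write down the likelihood, peel off a factor whose value is independent of the permutation $p$, and match what remains with the objective function of interest.

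First I would write the likelihood as a product over ordered pairs $i \neq j$,
\[
 L(G) = \prod_{i \neq j:\, A_{ij}=1} f(h_i, h_j) \prod_{i \neq j:\, A_{ij}=0} (1-f(h_i,h_j)),
\]
and then factor it as
\[
 L(G) = \Biggl( \prod_{i \neq j} (1-f(h_i,h_j)) \Biggr) \prod_{i \neq j:\, A_{ij}=1} \frac{f(h_i, h_j)}{1-f(h_i,h_j)}.
\]
Next I would argue that the bracketed ``null graph'' factor does not depend on the choice of permutation $p$. The reason is that the multiset of ordered pairs $\{(h_i,h_j): i \neq j\}$ is determined entirely by the multiset $\{\nu_1,\ldots,\nu_n\}$, and a permutation of the assignment $h_i = \nu_{p_i}$ merely rearranges which pair comes from which index $(i,j)$; the product over all ordered pairs is therefore invariant under $p$. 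Consequently, when maximizing over $p$, this factor may be discarded.

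Taking $-\log$ of what remains, maximizing the likelihood is equivalent to minimizing
\[
 \sum_{i \neq j:\, A_{ij}=1} \ln\!\left[\frac{1-f(h_i,h_j)}{f(h_i,h_j)}\right].
\]
Substituting the prescribed $f(h_i,h_j) = 1/(1+e^{\gamma (h_j-h_i-1)^2})$ gives $(1-f)/f = e^{\gamma (h_j-h_i-1)^2}$, so the log-ratio equals $\gamma (h_j-h_i-1)^2$ and the objective reduces to $\gamma \sum_{i,j} A_{ij}(h_j-h_i-1)^2$. Since the denominator $\sum_{i,j} A_{ij}$ in (\ref{eq:incoherence}) does not depend on $\boldsymbol{h}$, and $\gamma>0$ is a positive scaling, minimizing this quantity is equivalent to minimizing the trophic incoherence $F(\boldsymbol{h})$, which completes the argument.

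The only subtle step is the permutation-invariance of the null-graph factor; unlike the directed pRDRG case, where each unordered pair appears exactly once and the argument is immediate, here we sum over ordered pairs, so I would be careful to phrase the invariance in terms of the multiset of ordered pairs drawn from $\{\nu_i\}$. Everything else reduces to routine algebra mirroring the proof of Theorem~\ref{theorem_pRDRG}, and the freedom in choosing $\gamma>0$ reflects the scale invariance of the minimization.
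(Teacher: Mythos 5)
Your argument is correct and is essentially the paper's own proof: the paper simply delegates to the general result (Theorem~\ref{theorem_general_rdrg} with $I(h_i,h_j)=(h_j-h_i-1)^2$), whose proof is exactly your computation --- factor out the permutation-invariant null-graph probability, take negative logs, and match $\ln[(1-f)/f]=\gamma I$. Your extra care about the invariance of the product over \emph{ordered} pairs is a sound (and slightly more explicit) justification of the step the paper states without elaboration.
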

\begin{proof}
Noting that the denominator in (\ref{eq:incoherence}) is independent of
the choice of $\boldsymbol{h}$, this result is a special case of   Theorem~\ref{theorem_general_rdrg} below, with 
$I (h_i, h_j) = (h_j-h_i-1)^2$. 
\end{proof}

For the model in Theorem~\ref{theorem_trophic_level}, the probability of an edge $i \to j$ is a function of the shifted,
directed, squared difference in levels,  $(h_j-h_i-1)^2$. The larger this value, the lower the probability. Within the same level, where $h_i = h_j$, the probability is $1/(1+e^\gamma)$.  
The edge probability takes its maximum value of 1/2 when 
$h_j - h_i = 1$, that is, when the edge starts at one level and finishes at the next highest level.
We also see that the overall expected edge density is always smaller than 1/2. Across different levels, where $h_i \neq h_j$, the edge $i \to j$ and the edge $j \to i$ are not generated with the same probability. If $|h_j-h_i-1| < |h_i-h_j-1|$, the edge $i \to j$
is more likely than 
$j \to i$. The two edge probabilities are equal if and only if $h_i =h_j$.  Therefore, this model could be interpreted as a combination of 
an 
Erd\H{o}s--R{\'e}nyi
model within the same level and a periodic range-dependent model across different levels.  

The parameter $\gamma$ controls the decay rate of the likelihood as the shifted, directed, squared difference in levels increases. 
When $h_j - h_i = 1$, $\gamma$ plays no role.
If $\gamma = 0$, the model reduces to 
Erd\H{o}s--R{\'e}nyi
with an edge probability of 1/2. As $\gamma \rightarrow \infty$, the edge probability tends to
zero if $h_j-h_i \neq 1$. In this case, the model will generate a  multipartite graph where edges are only possible in one direction between adjacent levels, and this happens with probability 1/2. As mentioned previously in subsection~\ref{sec:rg}\ref{subsec:prdrg} and illustrated in 
Section~\ref{sec:synth}, $\gamma$ can be fitted from a maximum likelihood estimate 
or by matching the edge density.

We note that the definition of trophic incoherence in 
(\ref{eq:incoherence})
and the resulting Trophic Laplacian algorithm
make sense for a  
non-negatively weighted graph, in which case we have the following result. Here, to be concrete we assume that weights lie 
strictly between zero and one. \begin{HGL}Similar results can be obtained for weights from a discrete
distribution. \end{HGL}

\begin{theorem}
Suppose $\boldsymbol{h} \in \R^n$ is constrained to take values 
such that $h_i = \nu_{p_i}$, where 
$p$ is a permutation vector. 
Then minimizing the trophic incoherence $F(\boldsymbol{h})$
in (\ref{eq:incoherence}) 
over all such $\boldsymbol{h}$ 
for a weighted graph with weights in $(0,1)$  
is equivalent to 
maximizing the likelihood that the 
graph came from a model where each edge weight $A_{ij}$
is independent with density function 
\begin{equation}
f_{ij}(x) := \frac{1}{Z_{ij}e^{\gamma x(h_j-h_i-1)^2}}
\text{~for~} x \in (0,1),
\quad 
\text{and~}
f(x) = 0
\text{~otherwise},
\end{equation}
for any positive $\gamma$, where $Z_{ij} = \frac{1-e^{-\gamma (h_j-h_i-1)^2}}{\gamma (h_j-h_i-1)^2}$ is a normalization factor.
\label{theorem_trophic_weighted}
\end{theorem}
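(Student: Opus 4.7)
The plan is to mirror the structure of the proof of Theorem~\ref{theorem_trophic_level}: first strip $F(\bh)$ down to its numerator (since the denominator depends only on $A$), then expand the log-likelihood, and finally identify the piece that is constant when $\bh$ runs over permutations of $\nu$. Since the constraint $h_i = \nu_{p_i}$ does not change $\sum_{i,j} A_{ij}$, minimising $F(\bh)$ over admissible $\bh$ is equivalent to minimising
\begin{equation*}
N(\bh) := \sum_{i \neq j} A_{ij}\,(h_j - h_i - 1)^2.
\end{equation*}

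Because edge weights are independent under the proposed model, the log-likelihood of the observed weighted adjacency matrix splits as
\begin{equation*}
\log L(A) \;=\; \sum_{i \neq j} \log f_{ij}(A_{ij}) \;=\; -\gamma \sum_{i \neq j} A_{ij}(h_j - h_i - 1)^2 \;-\; \sum_{i \neq j} \log Z_{ij}.
\end{equation*}
The first sum is exactly $-\gamma N(\bh)$. The second sum depends on $\bh$ only through the ordered pairs $(h_i, h_j)$, and when $\bh = \nu \circ p$ the multiset of these pairs coincides with $\{(\nu_a,\nu_b) : a \neq b\}$ regardless of the permutation $p$. Hence $\sum_{i \neq j} \log Z_{ij}$ is a constant in $p$ and may be discarded, so maximising $\log L(A)$ is equivalent to minimising $N(\bh)$, and hence to minimising $F(\bh)$.

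The only step that is not purely formal is verifying that $f_{ij}$ is a proper density with the stated normalisation. A direct computation gives
\begin{equation*}
\int_0^1 e^{-\gamma x (h_j - h_i - 1)^2}\,dx \;=\; \frac{1 - e^{-\gamma (h_j - h_i - 1)^2}}{\gamma (h_j - h_i - 1)^2},
\end{equation*}
which matches the expression for $Z_{ij}$ in the statement. The degenerate case $h_j - h_i = 1$ has to be handled separately by taking the limit $Z_{ij} \to 1$, so that $f_{ij}$ reduces to the uniform density on $(0,1)$ and the permutation-invariance argument still goes through. This bookkeeping is the only possible obstacle I foresee; in fact the whole statement should follow as a direct corollary of the general RDRG theorem invoked in the proof of Theorem~\ref{theorem_trophic_level}, with $I(h_i,h_j) = (h_j - h_i - 1)^2$ and the continuous density playing the role previously occupied by the Bernoulli probability.
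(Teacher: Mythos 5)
Your proposal is correct and follows essentially the same route as the paper, which proves this result as a special case of a general weighted theorem (Theorem~\ref{theorem_general_weighted} with $I(h_i,h_j)=(h_j-h_i-1)^2$) via exactly the decomposition you use: verify the normalization integral, discard $\sum_{i\neq j}\log Z_{ij}$ as permutation-invariant, and match the remaining term to the incoherence numerator. Your explicit handling of the degenerate case $h_j-h_i=1$ (where $Z_{ij}\to 1$ and $f_{ij}$ becomes uniform) is a detail the paper leaves implicit.
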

\begin{proof}
This is a special case of 
Theorem~\ref{theorem_general_weighted} below,
where 
$I (h_i, h_j) = (h_j-h_i-1)^2$.
\end{proof}

\subsection{Generalised Random Graph Model}
The results in subsections~\ref{sec:rg}\ref{subsec:prdrg} and 
\ref{subsec:trg}
exploit the form of the 
objective function: 
the sum over all edges of a kernel function 
can be viewed as the sum of log-likelihoods.
This shows that the minimization problem is equivalent to maximizing the likelihood of an associated random graph model, in the setting where we assign nodes to a discrete set of 
scalar values. The restriction to discrete values is 
used in the proofs to make the probability of the 
null graph constant.
However, we emphasize that in practice the 
relaxed version of the optimization problems, which are solved by 
the two algorithms, do not have this restriction.
The Magnetic Laplacian algorithm 
produces real-valued phase angles and the Trophic
Laplacian algorithm produces real-valued trophic levels.

We may extend the connection in 
Theorem~\ref{theorem_trophic_level} to 
the case of higher dimensional node attributes,
that is, where we wish to associate each node with a 
discrete vector 
from a set 
$\{ \bnu^{[k]} \}_{k=1}^{n}$,
where each $\bnu^{[k]} \in \R^{d}$ for some $ d \geq 1$.
This setting arises, for example, if we wish to visualize the network in higher dimension; a natural extension of the ring structure would be to 
place nodes at regularly spaced points on the surface of the unit sphere, see Figure~\ref{fig:sphere_grid}, which we produced with the algorithm in \cite{deserno2004generate}. 
The next result generalizes Theorem~\ref{theorem_trophic_level}
to this case.

\begin{theorem}
Suppose we have an unweighted directed graph with adjacency matrix $A$ 
and a kernel function $I: \R^{d} \times \R^{d} \to \R_{+}$,
and suppose that we are free to assign elements 
$\{ \bh^{[k]} \}_{k=1}^{n}$ to values from the set
$\{ \bnu^{[k]} \}_{k=1}^{n}$; that is, we 
allow 
$ \bh^{[k]} = \bnu^{[p_k]}$
where 
$p $
is a permutation vector.
Then minimizing 
\begin{equation}
    \sum_{i,j} A_{ij} I(\bh^{[i]}, \bh^{[j]})
 \label{eq:general_incoherence}
\end{equation}
over all such 
$\{ \bh^{[k]} \}_{k=1}^{n}$
is equivalent to 
maximizing the likelihood 
that the graph 
came from 
a model where the (independent) probability of the edge 
$i \to j$ is
\begin{equation}
f(\bh^{[i]},\bh^{[j]}) = 
\frac{1}{1+e^{\gamma I(\bh^{[i]}, \bh^{[j]})}}, 
\label{eq:fedge}
\end{equation}
for any positive $\gamma$.
\label{theorem_general_rdrg}
\end{theorem}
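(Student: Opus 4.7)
The plan is to mimic the proof of Theorem~\ref{theorem_pRDRG}, but in the simpler setting where each directed edge is generated independently by a single probability function $f$. Because the edges are independent (unlike in the directed pRDRG case), I will write the likelihood as a product over ordered pairs of distinct nodes rather than over unordered pairs, and there is only one alternative outcome per edge to track.

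First I would write the likelihood of the observed graph under (\ref{eq:fedge}) as
\begin{equation*}
L(G) = \prod_{\substack{i,j\\ A_{ij}=1}} f(\bh^{[i]},\bh^{[j]}) \prod_{\substack{i\neq j\\ A_{ij}=0}} \bigl(1-f(\bh^{[i]},\bh^{[j]})\bigr),
\end{equation*}
and then multiply and divide by $1-f(\bh^{[i]},\bh^{[j]})$ on every present edge to rewrite this as
\begin{equation*}
L(G) = \prod_{\substack{i,j\\ A_{ij}=1}} \frac{f(\bh^{[i]},\bh^{[j]})}{1-f(\bh^{[i]},\bh^{[j]})} \;\times\; \prod_{i\neq j} \bigl(1-f(\bh^{[i]},\bh^{[j]})\bigr).
\end{equation*}

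The key step is to observe that the second ``null graph'' factor is independent of the permutation $p$. This is because when $\bh^{[i]}=\bnu^{[p_i]}$, the ordered pair $(\bh^{[i]},\bh^{[j]})$ ranges over the full set $\{(\bnu^{[k]},\bnu^{[l]}) : k\neq l\}$ exactly once as $(i,j)$ does, so the product reduces to $\prod_{k\neq l}(1-f(\bnu^{[k]},\bnu^{[l]}))$, which does not depend on $p$. Thus we can drop this factor when maximizing the likelihood. Taking logarithms and negating, maximizing $L(G)$ over permutations is equivalent to minimizing
\begin{equation*}
\sum_{i,j} A_{ij}\,\ln\!\left[\frac{1-f(\bh^{[i]},\bh^{[j]})}{f(\bh^{[i]},\bh^{[j]})}\right],
\end{equation*}
where we have used $A_{ij}\in\{0,1\}$ to include the indicator directly in the sum.

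Finally I would match this expression term-by-term with (\ref{eq:general_incoherence}). Setting
\begin{equation*}
\ln\!\left[\frac{1-f(\bh^{[i]},\bh^{[j]})}{f(\bh^{[i]},\bh^{[j]})}\right] = \gamma\, I(\bh^{[i]},\bh^{[j]})
\end{equation*}
for some positive scale factor $\gamma$ (which is free because both minimization problems are scale invariant), and solving algebraically for $f$ yields the logistic form (\ref{eq:fedge}). The positivity hypothesis $I:\R^d\times\R^d\to\R_+$ ensures that $f\in(0,1/2]$, so the model is well defined. I do not see a genuinely hard step: the main thing to be careful about is the invariance of the null-graph factor under permutations, which is the analogue of the observation used in the proof of Theorem~\ref{theorem_pRDRG} and is what licenses us to discard it.
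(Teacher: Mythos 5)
Your proposal is correct and follows essentially the same route as the paper's proof: factor the likelihood into a product of odds ratios over present edges times the null-graph probability, note that the latter is permutation-invariant and can be discarded, then take negative logarithms and match the resulting sum with (\ref{eq:general_incoherence}) to solve for the logistic form of $f$. Your added remark that $I \geq 0$ forces $f \in (0,1/2]$ is a small bonus the paper only notes in the discussion following the theorem.
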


\begin{proof}
Given $\{ \bh^{[k]} \}_{k=1}^{n}$, the probability of generating  a graph $G$ from the model stated in the theorem is  
\begin{align*}
    L(G) &= \prod_{i,j: A_{ij}=1}f(\bh^{[i]}, \bh^{[j]})\prod_{i,j: A_{ij}=0}\left(1-f(\bh^{[i]}, \bh^{[j]})\right)\\
&=\prod_{i,j:  A_{ij}=1}\frac{f(\bh^{[i]},\bh^{[j]})}{1-f(\bh^{[i]}, \bh^{[j])}}\prod_{i,j}\left(1-f(\bh^{[i]}, \bh^{[j]})\right).
\end{align*}
The second factor on the right hand side, the probability of the null graph,
does not depend on the choice of 
$\{ \bh^{[k]} \}_{k=1}^{n}$. So we may ignore this factor, and after taking logs and negating we arrive at the equivalent problem of minimizing 
\begin{equation}
\sum_{i,j: A_{ij}=1 }
\ln \left[\frac{1-f(\bh^{[i]}, \bh^{[j]})}{f(\bh^{[i]}, \bh^{[j]})}\right]. \label{eq:maxL1}
\end{equation}
Comparing (\ref{eq:maxL1}) and 
(\ref{eq:general_incoherence}), we see
that two minimization problems have the same solution
when 
\begin{equation*}
\text{ln}\left[\frac{1-f(\bh^{[i]}, \bh^{[j]})}{f(\bh^{[i]}, \bh^{[j]})}\right]=\gamma I(\bh^{[i]},\bh^{[j]}), 
\end{equation*}
for any positive $\gamma$, and the result follows.
\end{proof}

For the model in Theorem~\ref{theorem_general_rdrg},
given 
$\{ \bh^{[k]} \}_{k=1}^{n}$
the edge
$i \to j$
appears according to a Bernoulli distribution with probability $f(\bh^{[i]}, \bh^{[j]})$, and hence with variance 
 \begin{equation*}
f(\bh^{[i]}, \bh^{[j]})[1-f(\bh^{[i]},\bh^{[j]})] =  \frac{e^{\gamma I(\bh^{[i]}, \bh^{[j]})}}{[1+e^{\gamma 
I(\bh^{[i]},\bh^{[j]})}]^2}. 
\end{equation*}
When 
$I(\bh^{[i]}, \bh^{[j]}) = 0$
the probability
is 1/2 and the variance takes its largest value, 1/4.
The edge probability is symmetric about $i$ and $j$ if and only if the function $I$ is symmetric about its arguments.
In the case of squared Euclidean distance,  
$I(\bh^{[i]}, \bh^{[j]})
 = \| \bh^{[i]} - \bh^{[j]} \|^2$, 
 and an undirected graph,  
 the 
 relaxed version of the minimization problem is solved by 
 taking $d$ eigenvectors corresponding to the smallest
 eigenvalues of the standard graph Laplacian.

For completeness, we now state and prove a weighted analogue of Theorem~\ref{theorem_general_rdrg} \begin{HGL}assuming that weights lie 
strictly between zero and one. 
Discrete-valued weights may be dealt with similarly. \end{HGL}
\begin{theorem}
Suppose 
$\{ \bh^{[k]} \}_{k=1}^{n}$ may take values from the given set
$\{ \bnu^{[k]} \}_{k=1}^{n}$; that is, 
$ \bh^{[k]} = \bnu^{[p_k]} \in \R^d$,
where 
$p $
is a permutation vector.
Then, given a weighted graph with weights in 
$(0,1)$, 
minimizing the expression 
(\ref{eq:general_incoherence})
over all such
$\{ \bh^{[k]} \}_{k=1}^{n}$ is equivalent to 
maximizing the likelihood that the 
graph came from a model where $A_{ij}$ has 
(independent) density 
\begin{equation}
f_{ij}(x) = \frac{1}{Z_{ij}e^{\gamma x I(\bh^{[i]}, \bh^{[j]})}},
\text{~for~} x \in (0,1),
\quad 
\text{and~}
f(x) = 0
\text{~otherwise},
\label{eq:fijx}
\end{equation}
for any positive $\gamma$, where 
\begin{equation*}
Z_{ij} = \frac{1- e^{-\gamma I(\bh^{[i]}, \bh^{[j]})}}{\gamma I(\bh^{[i]}, \bh^{[j]})}
\end{equation*}
 is a normalization factor.
\label{theorem_general_weighted}
\end{theorem}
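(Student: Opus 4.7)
The plan is to mirror the proof of Theorem~\ref{theorem_general_rdrg}, but for a likelihood built from a product of densities rather than Bernoulli masses. First I would write the likelihood of the weighted graph under the model (\ref{eq:fijx}) as
\begin{equation*}
L(G) = \prod_{i \neq j} f_{ij}(A_{ij}) = \prod_{i \neq j} \frac{1}{Z_{ij}} \exp\!\bigl[-\gamma A_{ij} I(\bh^{[i]}, \bh^{[j]})\bigr],
\end{equation*}
where $Z_{ij}$ is the normalization factor given in the theorem. A quick side check is that $Z_{ij}$ really is the correct normalization, obtained by integrating $e^{-\gamma x I(\bh^{[i]},\bh^{[j]})}$ over $x \in (0,1)$; this gives $(1-e^{-\gamma I})/(\gamma I)$, matching the stated expression (with the limiting value $Z_{ij} \to 1$ when $I(\bh^{[i]}, \bh^{[j]})=0$).

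Next, I would take logs and negate to turn maximization of $L(G)$ into minimization of
\begin{equation*}
-\ln L(G) = \gamma \sum_{i \neq j} A_{ij}\, I(\bh^{[i]}, \bh^{[j]}) + \sum_{i \neq j} \ln Z_{ij}.
\end{equation*}
The first term, up to the positive constant $\gamma$, is exactly the objective (\ref{eq:general_incoherence}) (noting that diagonal terms vanish in the usual convention of no self-loops, or are absorbed harmlessly). The crucial step is to show that the second term is invariant under the choice of permutation $p$, so it can be dropped without affecting the argmin.

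For the invariance, I would use the standard multiset argument: because $\bh^{[k]} = \bnu^{[p_k]}$ for a permutation $p$, as $(i,j)$ ranges over all ordered pairs with $i \neq j$ the multiset of pairs $(\bh^{[i]}, \bh^{[j]})$ coincides with the multiset of pairs $(\bnu^{[i]}, \bnu^{[j]})$. Since $Z_{ij}$ depends on $(\bh^{[i]}, \bh^{[j]})$ only through the value of $I(\bh^{[i]}, \bh^{[j]})$, the sum $\sum_{i\neq j} \ln Z_{ij}$ depends only on this multiset and is therefore the same for every permutation. Hence minimizing $-\ln L(G)$ is equivalent to minimizing $\gamma \sum_{i,j} A_{ij} I(\bh^{[i]}, \bh^{[j]})$, and since $\gamma > 0$ this is equivalent to minimizing (\ref{eq:general_incoherence}), completing the proof.

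The only mild obstacle is the bookkeeping around the normalization: one must resist the temptation to pull $\sum \ln Z_{ij}$ outside the optimization as if it were a numeric constant (it is not constant as a function of the $\bh^{[k]}$ in general), and instead justify its removal only under the permutation constraint via the multiset argument. Everything else is a direct adaptation of the proof of Theorem~\ref{theorem_general_rdrg}, with continuous densities replacing discrete edge probabilities.
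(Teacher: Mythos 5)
Your proposal is correct and follows essentially the same route as the paper's proof: verify that $Z_{ij}$ normalizes the density, observe that $\prod_{i,j} Z_{ij}$ (equivalently $\sum_{i,j}\ln Z_{ij}$) is invariant under the permutation constraint, and then take logarithms and negate to match the objective (\ref{eq:general_incoherence}). The paper states the permutation-invariance of the normalization product without elaboration, whereas you spell out the multiset argument explicitly --- a worthwhile addition, but not a different proof.
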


\begin{proof}
It is straightforward to check that the 
normalization factor
$Z_{ij}$ 
ensures
\[
\int_{y=0}^{1} f_{ij}(y) \, dy = 1.
\]
Now the product over all pairs 
$\prod_{i,j}Z_{ij}$
is independent of the choice of permutation vector $p$.
Hence, under the model defined in the theorem, maximizing the likelihood 
of the graph $G$ is 
equivalent to maximizing 
$
     \prod_{i,j}f_{ij}(A_{ij})
$.
After taking logarithms and negating, 
we see that 
the choice 
(\ref{eq:fijx}) allows us to 
match
(\ref{eq:general_incoherence}).
\end{proof}

%\subsection{A Note on the Directed pRDRG Model}
%\label{subsec:note}
\begin{remark}
It is natural to ask whether the frustration 
(\ref{eq:frustration}) fits into the 
form 
(\ref{eq:general_incoherence}), and hence has an associated 
random graph model of the form 
(\ref{eq:fedge}).
We see from 
(\ref{eq:frusb}) 
that 
the frustration may be written 
\[
\eta(\boldsymbol{\theta}) 
= \sum_{i,j} A_{ij}| e^{\boldsymbol{\mathrm{i}} \theta_i }-e^{\boldsymbol{\mathrm{i}} \delta_{ij} }e^{\boldsymbol{\mathrm{i}} \theta_j}  |^2. 
\]
However, the factor 
$| e^{\boldsymbol{\mathrm{i}} \theta_i }-e^{\boldsymbol{\mathrm{i}} \delta_{ij} }e^{\boldsymbol{\mathrm{i}} \theta_j}  |^2$
depends (through $ \delta_{ij}$) on 
$A_{ij}$, and hence
we do not have expression of the form 
(\ref{eq:general_incoherence}).
This explains 
why a new type of model, with 
conditional dependence between the 
$i \to j$ and 
$j \to i$ connections, was needed for 
Theorem~\ref{theorem_pRDRG}.
\end{remark}

\subsection{Model Comparison}

The random graph models appearing in Section~\ref{sec:rg}
capture the characteristics of
linear and periodic directed
hierarchies.
Hence it may be of interest (a) to analyse 
properties of these models and (b) to use these 
models to evaluate the performance of computational
algorithms.
However, 
in the remainder of this work we focus on a follow-on topic 
of more direct practical significance.
The Magnetic Laplacian and Trophic Laplacian algorithms 
allow us to 
compute 
node attributes
$\boldsymbol{\theta}$ and $\bh$ in $\R^n$ for a given graph,
\begin{HGL}
leading to unsupervised node ordering. The main computation required in this step is finding dominant eigenvector-eigenvalue pairs. Assuming that the network is sparse (each node has an $O(1)$ degree) and that the power method gives the required accuracy in a finite number of iterations, this is an $\mathcal{O}(n)$ computation. \end{HGL}
Motivated by Theorems~\ref{theorem_pRDRG}
and 
\ref{theorem_trophic_level},
we may then compute the likelihood 
of the graph for this choice of attributes, \begin{HGL}which has a complexity of $\mathcal{O}(n^2)$. \end{HGL}
By comparing likelihoods we may 
quantify which underlying structure is 
best supported by the data.
An extra consideration is 
that both random graph models 
involve a free parameter, $\gamma > 0$, which is needed to evaluate the likelihood.
As discussed earlier, 
one option is to fit 
$\gamma$ to the data, for example by 
matching the expected edge density from the model with the 
edge density of the given graph.
However, based on our computational 
tests, we found that a more reliable approach was to 
choose 
the $\gamma$ 
that maximizes the likelihood, once the node attributes
were available; see Sections~\ref{sec:synth} and \ref{sec:real} for examples. Our overall proposed workflow for 
model comparison is summarized in Algorithm~\ref{algo:model_comparison}.

\begin{algorithm}[H]
\SetAlgoLined
\KwResult{Comparison of possible graph structures}
\textbf{Input adjacency matrix} $A$\;
\For{Candidate spectral methods}{
\textbf{Compute node attributes} (in our case with Magnetic and Trophic Laplacian algorithms)\;
\textbf{Derive the associated random graph model} \;
\textbf{Calculate maximum likelihood over $\gamma >0$}\;
}
\textbf{Report or compare maximum likelihoods} 
\caption{Model Comparison}
\label{algo:model_comparison}
\end{algorithm}

%%%%%%%%%%%%%%%%%%%%%%%%%%%%%%%%%%%%%%%%%
\section{Results on Synthetic Networks}
\label{sec:synth}

In this section, we demonstrate the model comparison workflow on synthetic networks. These networks are generated using the directed pRDRG model and the trophic RDRG model.
Hence, we have a ``ground truth'' concerning
whether a network is more linear or periodic. 
Note that the Magnetic Laplacian 
algorithm and associated
random graph model have a parameter $g$ that controls the spacing between clusters. Therefore, when using the Magnetic Laplacian algorithm
our 
first step is to select the parameter $g$ based on the maximum likelihood of the graph.

\subsection{Directed pRDRG Model}
We generate a synthetic network using the directed pRDRG model with $K$ clusters of size $m$, and hence $n = m K$ nodes. 
An array of angles $\boldsymbol{\theta} \in \R^n$ is created, forming evenly spaced clusters $C_1, C_2, ..., C_K$. 
This is achieved by letting $\theta_i = \frac{2\pi (l-1)}{K}+\sigma$ if $i\in C_l$, where $\sigma \sim 
\mathrm{unif}(-a,a)$ is added noise. We then construct the adjacency matrix according to the probabilities in Theorem \ref{theorem_pRDRG} with $g=1/K$. We choose $m=100$, $K=5$, $\gamma=5$ and $a = 0.2$ and the corresponding adjacency matrix is shown in  Figure~\ref{fig:original_matrix_pRDRG}.

%As a result, the nodes form 5 clusters with directed flows as shown by the adjacency matrix in Figure \ref{fig:original_matrix_pRDRG}.

The Magnetic Laplacian algorithm is then applied to the adjacency matrix to estimate phase angles and reorder the nodes. The reordered adjacency matrix (Figure \ref{fig:magnetic_matrix_pRDRG}) recovers the original structure. The Trophic Laplacian algorithm is also applied to estimate the trophic level of each node. Figure \ref{fig:trophic_matrix_pRDRG} shows the adjacency matrix reordered by the estimated trophic levels, which hides the original pattern. Intuitively, the Trophic Laplacian algorithm is unable to distinguish between these nodes
since there is no clear ``lowest'' or ``highest'' level among the 
directed clusters.

Figure \ref{fig:opt_g_pRDRG} illustrates how the optimal parameter $g$ is selected. The plots show the likelihood that the network is generated by a directed pRDRG model for $g = \frac{1}{2}, \frac{1}{3}, \frac{1}{4}, \frac{1}{5}, \frac{1}{6}$, assuming we are interested in structures with at most 6 directed clusters.
We see that $g=\frac{1}{5}$ has the highest maximum likelihood, as expected. Consequently, we choose $g=1/5$ for the Magnetic Laplacian algorithm.  In addition for this value of $g$ we plot in  Figure \ref{fig:estimation_vs_actual_ML_pRDRG} the phase angles estimated with the Magnetic Laplacian algorithm  against the true phase angles. The linear relationship confirms that the algorithm recovers the $5$ clusters in the presence of noise.  

We finally in Figure \ref{fig:model_comparison_pRDRG}  compare the likelihood of a directed pRDRG against the likelihood of a trophic RDRG.  Both likelihoods are calculated using several test points for $\gamma $. The highest points are highlighted with circles and they correspond to the maximum likelihood estimators (MLE) for $\gamma$. Not surprisingly, in this case the Magnetic Laplacian algorithm achieves a higher maximum.
Asterisks highlight 
the point
estimates arising when the expected number of edges is matched to the actual number of edges.
We see here, and also observed 
in similar experiments, 
that the maximum likelihood estimate for $\gamma$
produces a more accurate result.
We also found (numerical experiments not presented here) that the accuracy of both types of $\gamma$ estimates improves as $n$ increases when using the Magnetic Laplacian algorithm.

\begin{figure}
\centering

\begin{subfigure}{.3\linewidth}
\centering
\includegraphics[width=\textwidth]{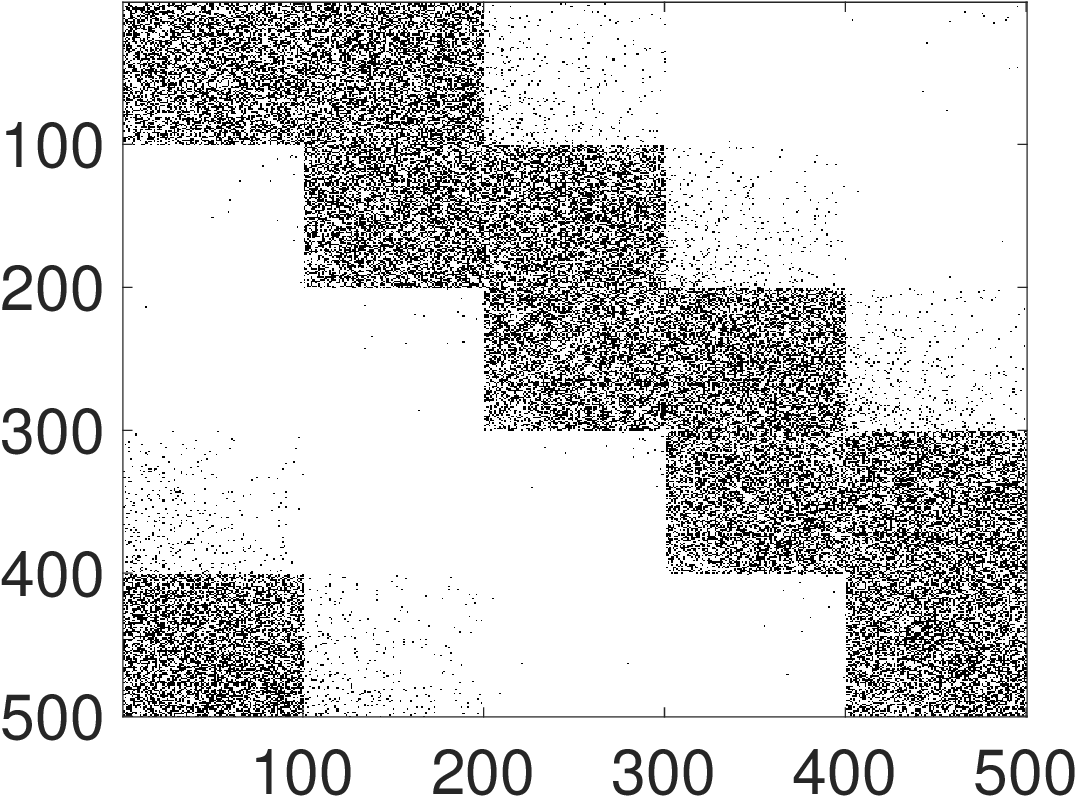}
\caption{Input adjacency matrix}
\label{fig:original_matrix_pRDRG}
\end{subfigure}
    \hfill
\begin{subfigure}{.3\linewidth}
\centering
\includegraphics[width=\textwidth]{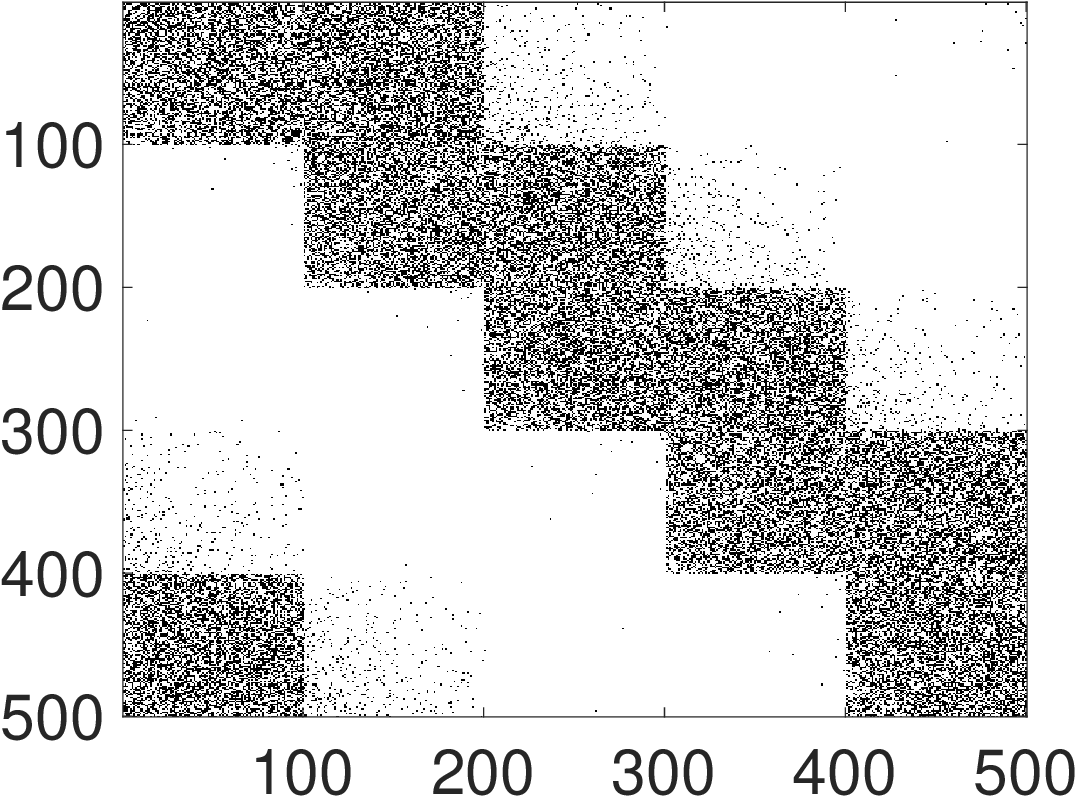}
\caption{Magnetic Laplacian reordering}
\label{fig:magnetic_matrix_pRDRG}
\end{subfigure}
    \hfill
\begin{subfigure}{.3\linewidth}
\centering
\includegraphics[width=\textwidth]{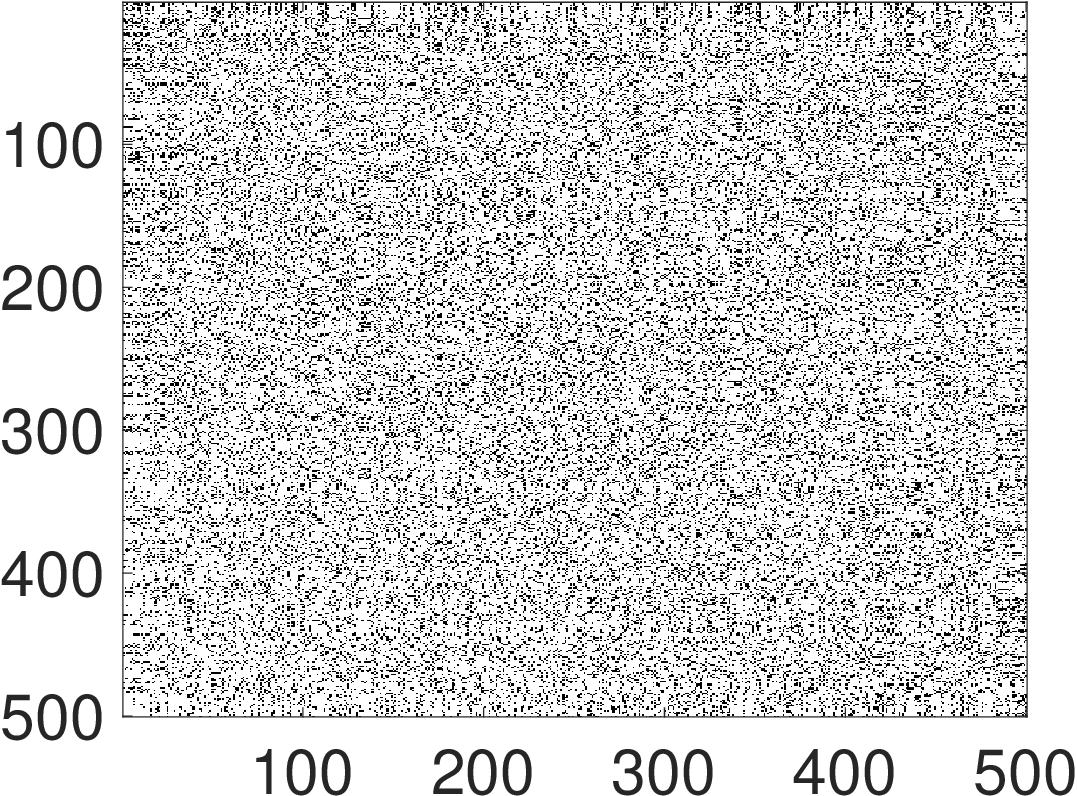}
\caption{Trophic Laplacian reordering}
\label{fig:trophic_matrix_pRDRG}
\end{subfigure}

\bigskip
\begin{subfigure}{.3\linewidth}
\centering
\includegraphics[width=\textwidth]{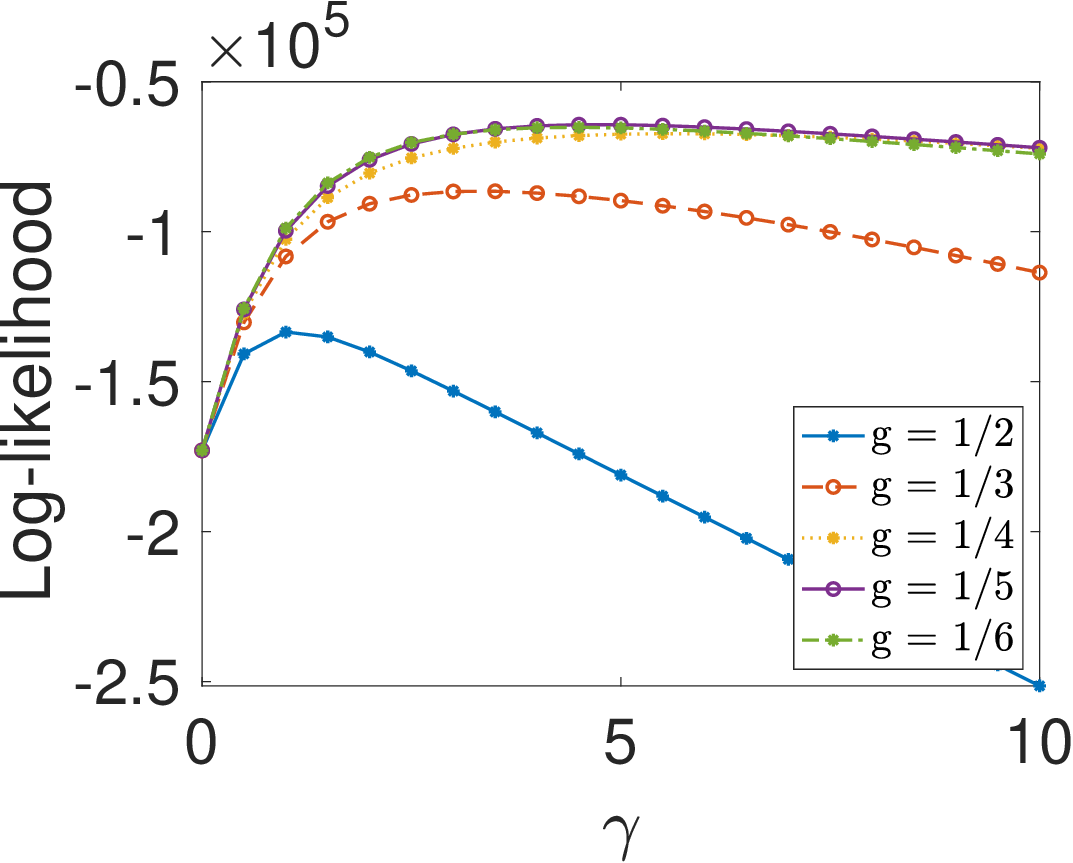}
\caption{Likelihood of directed pRDRG}
\label{fig:opt_g_pRDRG}
\end{subfigure}
    \hfill
\begin{subfigure}{.3\linewidth}
\centering
\includegraphics[width=\textwidth]{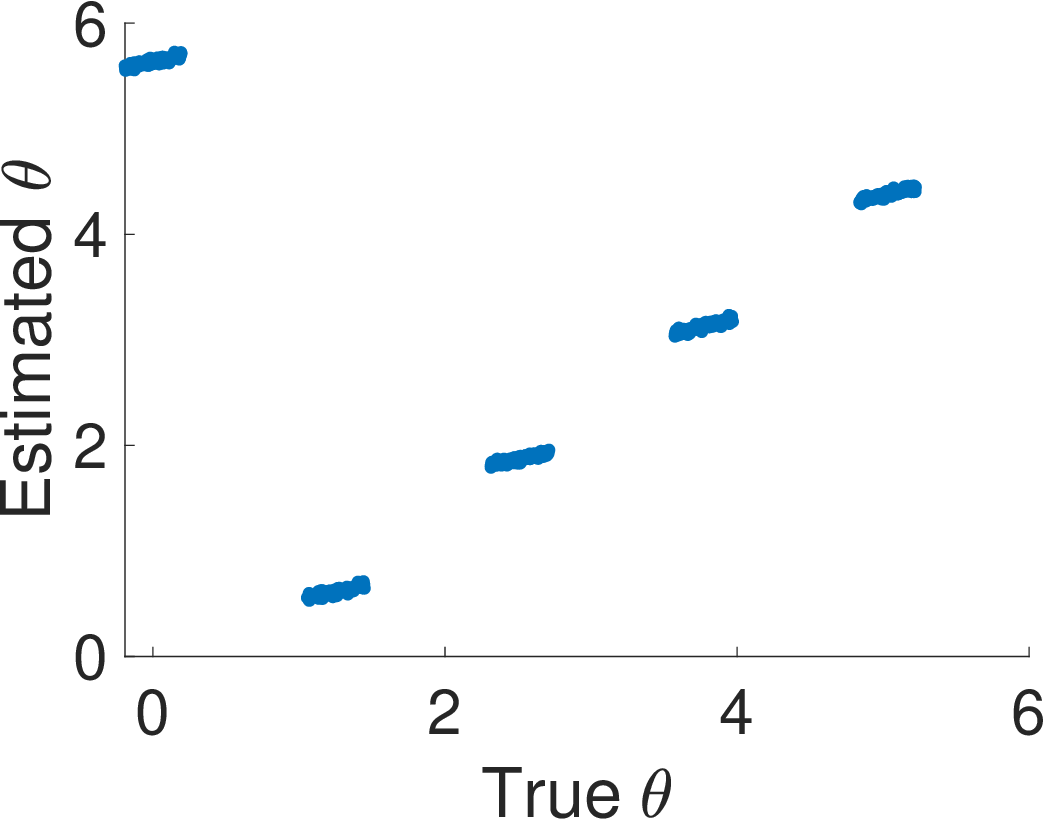}
\caption{Estimated $\theta$}
\label{fig:estimation_vs_actual_ML_pRDRG}
\end{subfigure}
    \hfill
\begin{subfigure}{.3\linewidth}
\centering
\includegraphics[width=\textwidth]{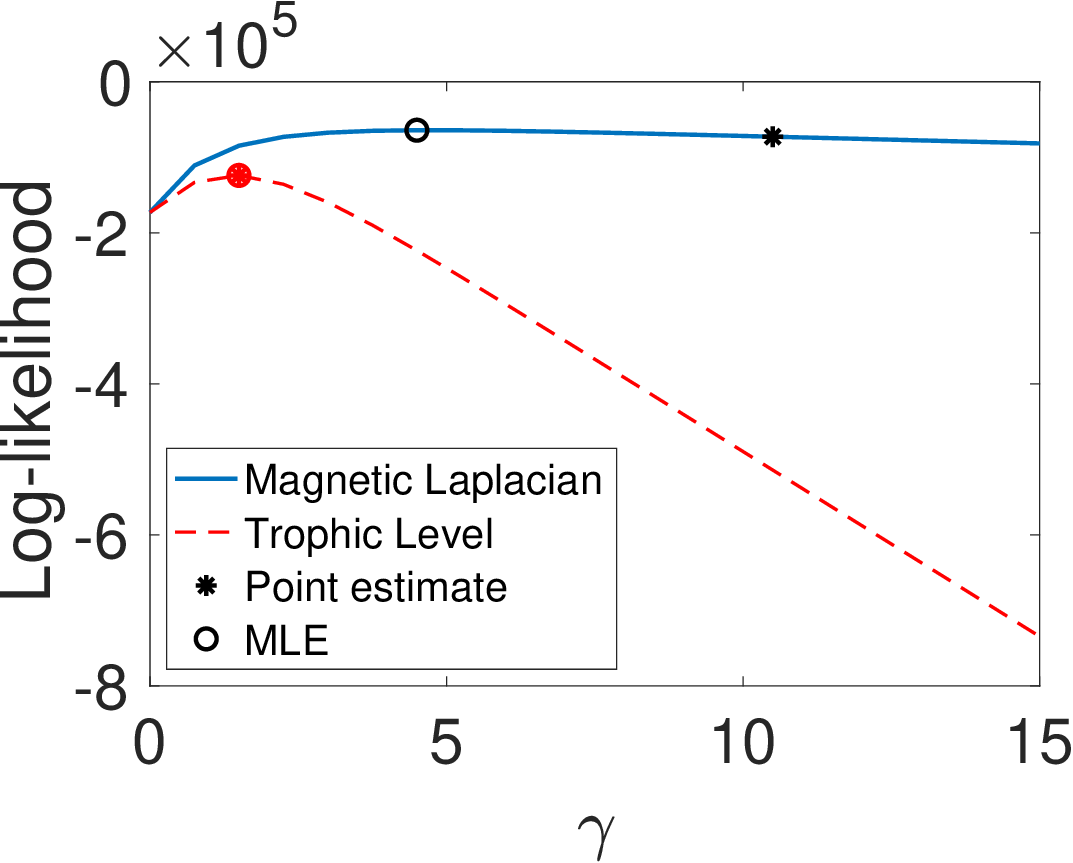}
\caption{Model comparison}
\label{fig:model_comparison_pRDRG}
\end{subfigure}
        \caption{Magnetic Laplacian and Trophic Laplacian algorithms applied to a synthetic directed pRDRG}
\end{figure}

\subsection{The Trophic RDRG model}

Following on from the previous subsection, we now generate synthetic data by simulating the trophic RDRG model with levels 
$C_1, C_2, \ldots, C_K$, where each level has $m$ nodes. In particular,  we generate an array of trophic indices $\boldsymbol{h} \in \R^n$,  where the total number of nodes is $n = m K$. We let $h_i = l+\sigma$ if $i\in C_l$ for $1\leq l \leq K$, where $\sigma \sim \mathrm{unif}(-a,a)$ is added noise. The edges are then generated according to the probabilities in Theorem \ref{theorem_trophic_level}. In the following example we use $K=5$, $m=100$, $a=0.2$ and $\gamma = 5$.  This generates a network with 5 clusters forming a
linear directed flow, as shown in  Figure~\ref{fig:original_matrix_trophic}. 

We see in Figure \ref{fig:trophic_matrix_trophic}
that the 
Trophic Laplacian
algorithm recovers the underlying pattern.
Figure \ref{fig:magnetic_matrix_trophic}
shows that the Magnetic Laplacian algorithm 
also gives adjacent locations to nodes in the same cluster, and places the clusters in order, modulo a ``wrap-around'' effect
that arises due to its periodic nature.
Figure \ref{fig:opt_g_trophic} suggests that the optimal 
Magnetic Laplacian
parameter is $g = 1/6$. For this case,
it is reasonable that $g=1/K$ is not identified, since the disconnection between the first and the last cluster contradicts the structure of the directed pRDRG model. 
%However, the minimization of the frustration in (\ref{eq:frustration}) has a trivial solution when $g=0$ and all nodes overlap. Therefore, the smallest $g$ is chosen as it is closest to the trivial solution. 

The  trophic levels estimated using the Trophic Laplacian are consistent with the true trophic levels, as shown by the linear pattern in Figure \ref{fig:estimated_h_trophic}. As expected, the Trophic Laplacian produces a higher maximum likelihood for this network (Figure \ref{fig:model_comparison_trophic}) and a more accurate MLE and point estimate for $\gamma$. We observe (in similar experiments not presented here) that when using the Trophic Laplacian, the accuracy of both estimates increase using the Trophic Laplacian.

\begin{figure}
\centering
\begin{subfigure}{.3\linewidth}
\centering
\includegraphics[width=\textwidth]{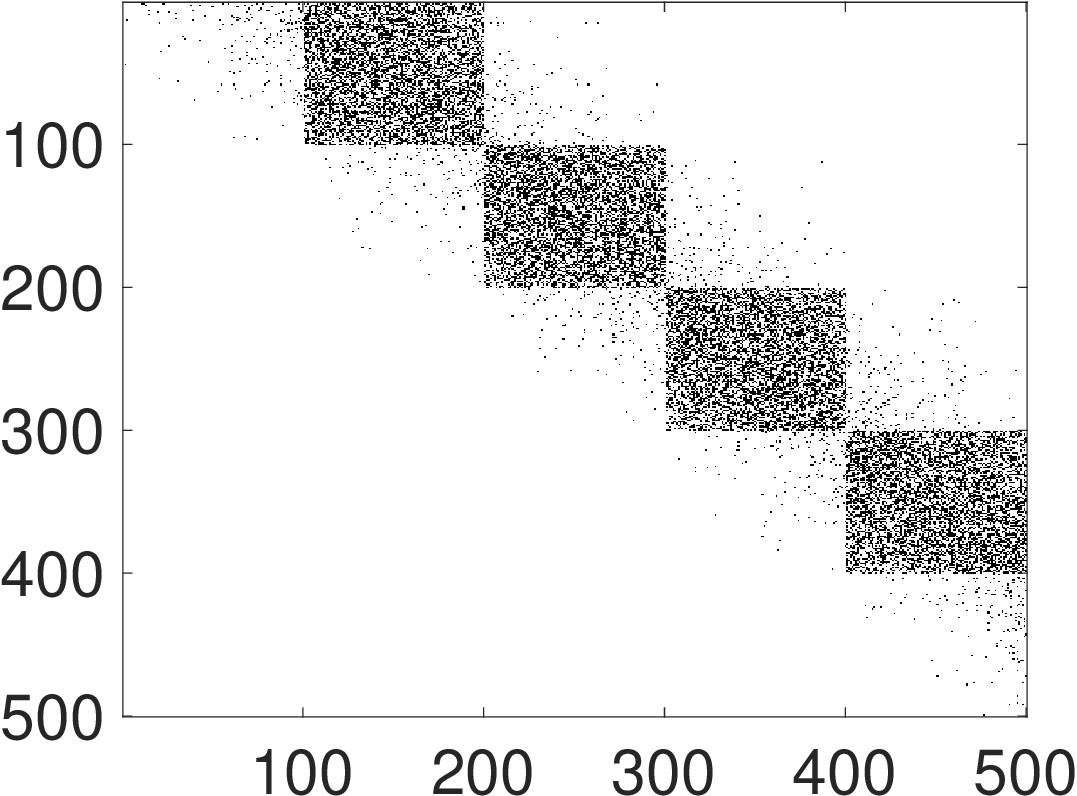}
\caption{Input adjacency matrix}
\label{fig:original_matrix_trophic}
\end{subfigure}
    \hfill
\begin{subfigure}{.3\linewidth}
\centering
\includegraphics[width=\textwidth]{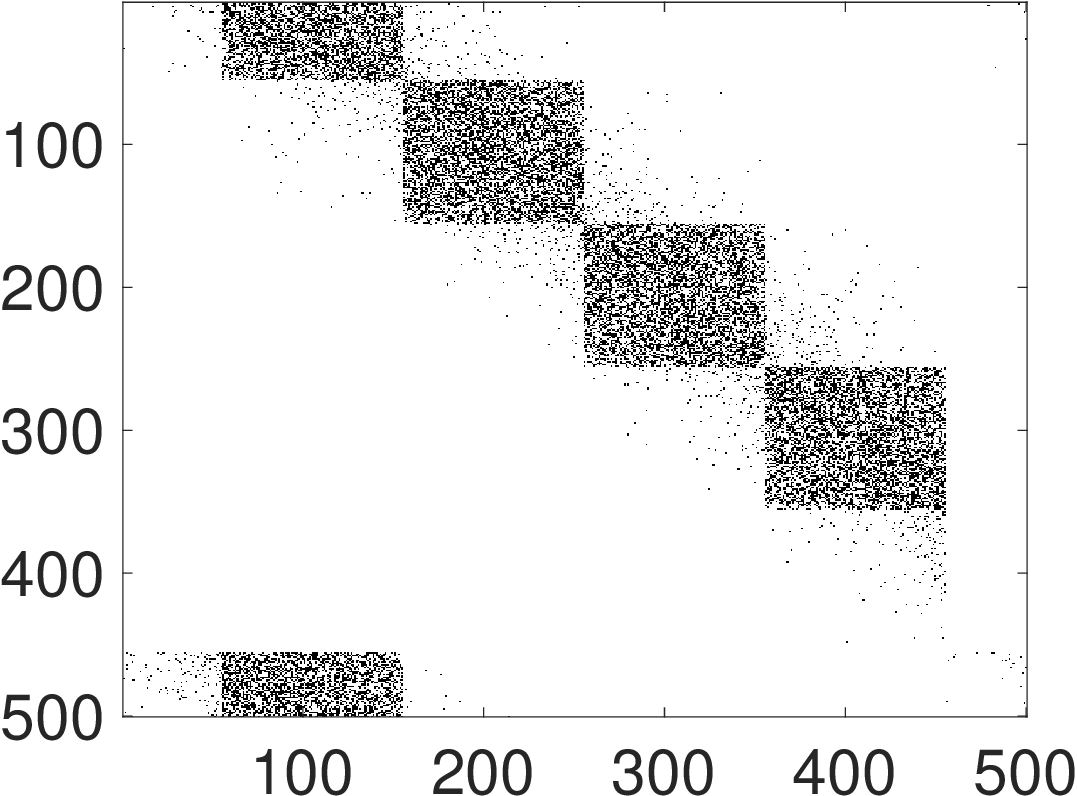}
\caption{Magnetic Laplacian reordering}
\label{fig:magnetic_matrix_trophic}
\end{subfigure}
    \hfill
\begin{subfigure}{.3\linewidth}
\centering
\includegraphics[width=\textwidth]{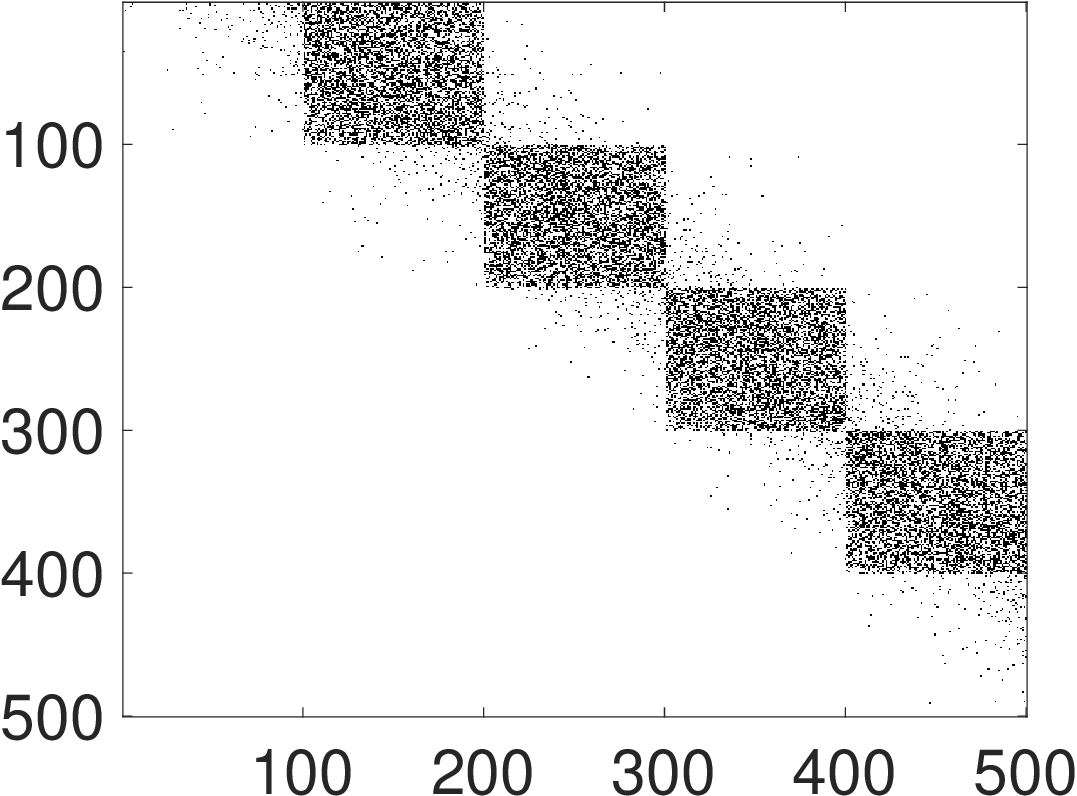}
\caption{Trophic Laplacian reordering}
\label{fig:trophic_matrix_trophic}
\end{subfigure}

\bigskip
\begin{subfigure}{.3\linewidth}
\centering
\includegraphics[width=\textwidth]{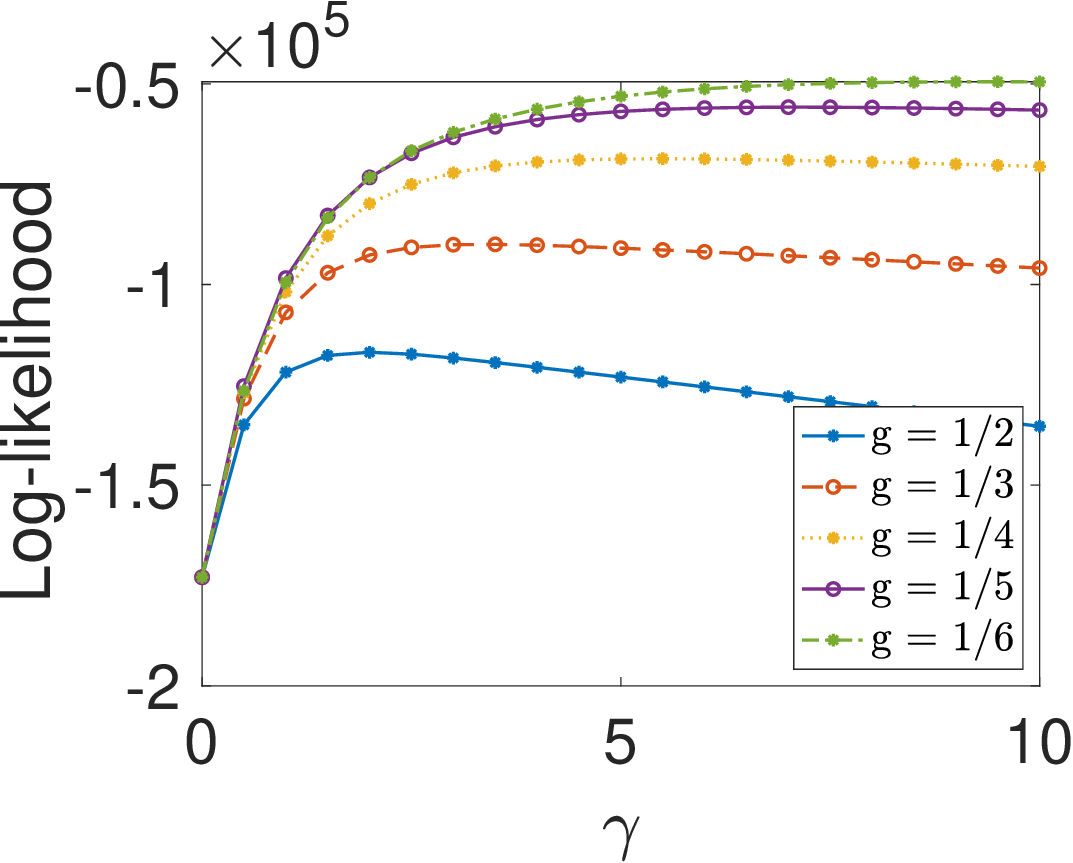}
\caption{Likelihood of directed pRDRG}
\label{fig:opt_g_trophic}
\end{subfigure}
    \hfill
\begin{subfigure}{.3\linewidth}
\centering
\includegraphics[width=\textwidth]{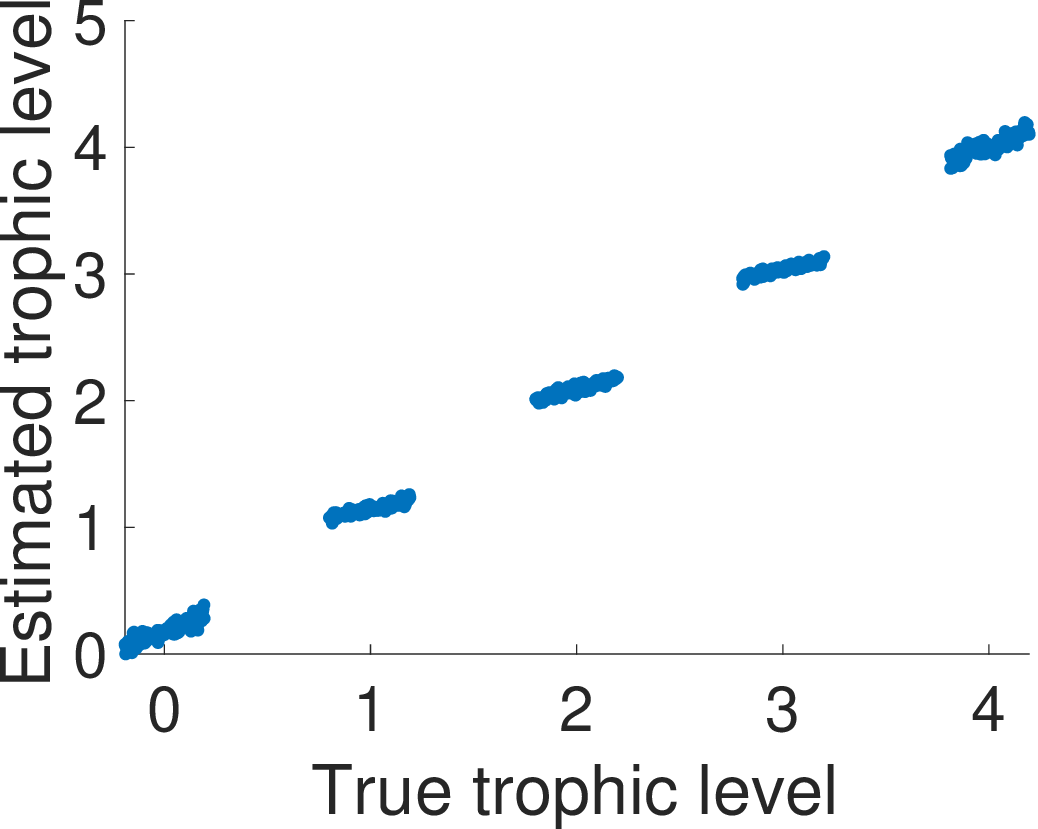}
\caption{Estimated trophic level}
\label{fig:estimated_h_trophic}
\end{subfigure}
\hfill
\begin{subfigure}{.3\linewidth}
\centering
\includegraphics[width=\textwidth]{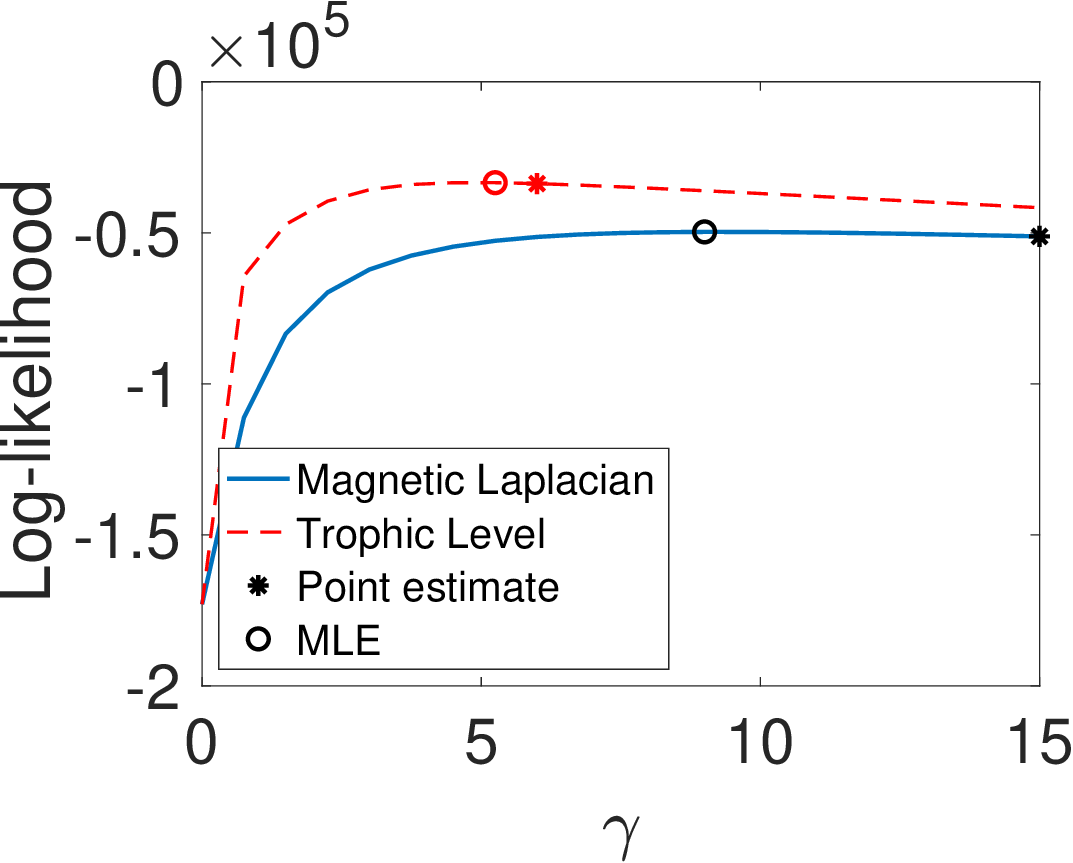}
\caption{Model comparison}
\label{fig:model_comparison_trophic}
\end{subfigure}
        \caption{Magnetic Laplacian and Trophic Laplacian algorithms applied to a synthetic trophic RDRG}
\end{figure}

\section{Results on Real Networks}
\label{sec:real}

We now discuss practical use cases for the model comparison tool on a range of real networks.
We emphasize that the tool is not designed
to discover whether a given directed network has linear 
or directed hierarchical structure; rather it
aims to quantify which of the two 
structures is best supported by the data in a relative sense. Since both models under investigation assume no self-loops, we discard these if they are present in the data. Following common practice, we also preprocess by retaining the largest strongly connected component to emphasize directed cycles. This ensures that any pair of nodes can be connected through a sequence of directed edges. However, when the strongly connected component contains too few nodes, we analyze the largest weakly connected component instead.
 
 We give details on four networks, covering examples of the two cases where 
 linear and periodic structure dominates.
 For the first two networks, we show network   
 visualizations to illustrate the results further.
 In subsection~\ref{subsec:tables} we present summary results over  15 networks.

\subsection{Food Web}

In the Florida Bay food web\footnote{\url{https://snap.stanford.edu/data/Florida-bay.html}}\cite{snapnets}, nodes are components of the system, and unweighted directed edges represent carbon transfer from the source nodes to the target nodes \cite{ulanowicz2005network}, which usually means that the latter feed on the former. Besides organisms, the nodes also contain non-living components, such as carbon dissolved in the water column. Since we are more interested in the relationship between organisms, we remove those non-living components from the network. We analyze the largest strongly connected component of the network, which comprises 12 nodes and 28 edges. 

We estimate the phase angles of each node using the Magnetic Laplacian algorithm based on the optimal choice $g=1/3$ (Figure \ref{fig:opt_g_food_web}). Figure \ref{fig:model_comparison_food_web} compares the likelihood of the food web being generated by the directed pRDRG model with the likelihood of it being generated by the trophic RDRG model, as $\gamma$ varies. The directed pRDRG model achieves a higher maximum likelihood, suggesting that the structure is more periodic than linear. In Figure \ref{fig:trophic_graph_food_web}, the heights of the nodes correspond to their estimated trophic levels on a vertical axis.
We see that 22 edges point upwards, these are shown in blue.
There are 6 downward edges, highlighted in red, which violate the 
trophic 
structure.
The Magnetic Laplacian mapping in Figure \ref{fig:meigenmap_food_web} arranges 26 edges in a counterclockwise direction, shown in blue, with 
2 edges, shown in red, violating the structure and pointing 
in the reverse orientation.  

With $g = 1/3$, the Magnetic Laplacian mapping 
is encouraging cycles 
 in the food chain, and these are visible in Figure \ref{fig:meigenmap_food_web}, notably between members of three categories: (i) flatfish and other demersal fishes; (ii) lizardfish and eels; and (iii) toadfish and brotalus. 
 Another noticeable distinction is that the Magnetic Laplacian mapping positions eels close to lizardfish, and flatfish near other demersal fishes by accounting for the reciprocal edges, while the Trophic Laplacian mapping places them further apart.
 \begin{HGL}
 In Figures~\ref{fig:trophic_ord_food_web} 
 and
 \ref{fig:mag_ord_food_web} we show the reordered
adjacency matrix arising 
from the two algorithms. \end{HGL}

\begin{figure}
\centering
\begin{subfigure}{.47\linewidth}
\centering
\includegraphics[width=\textwidth]{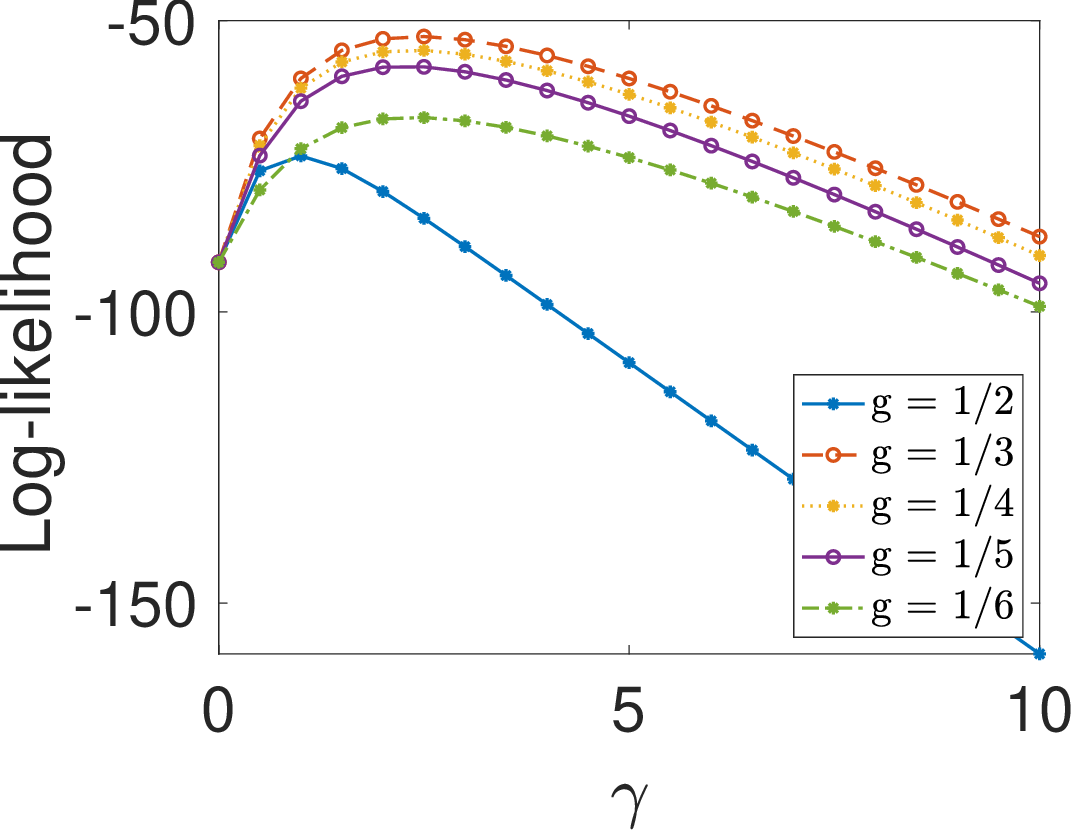}
\caption{Likelihood of directed pRDRG}
\label{fig:opt_g_food_web}
\end{subfigure}
    \hfill
\begin{subfigure}{.47\linewidth}
\centering
\includegraphics[width=\textwidth]{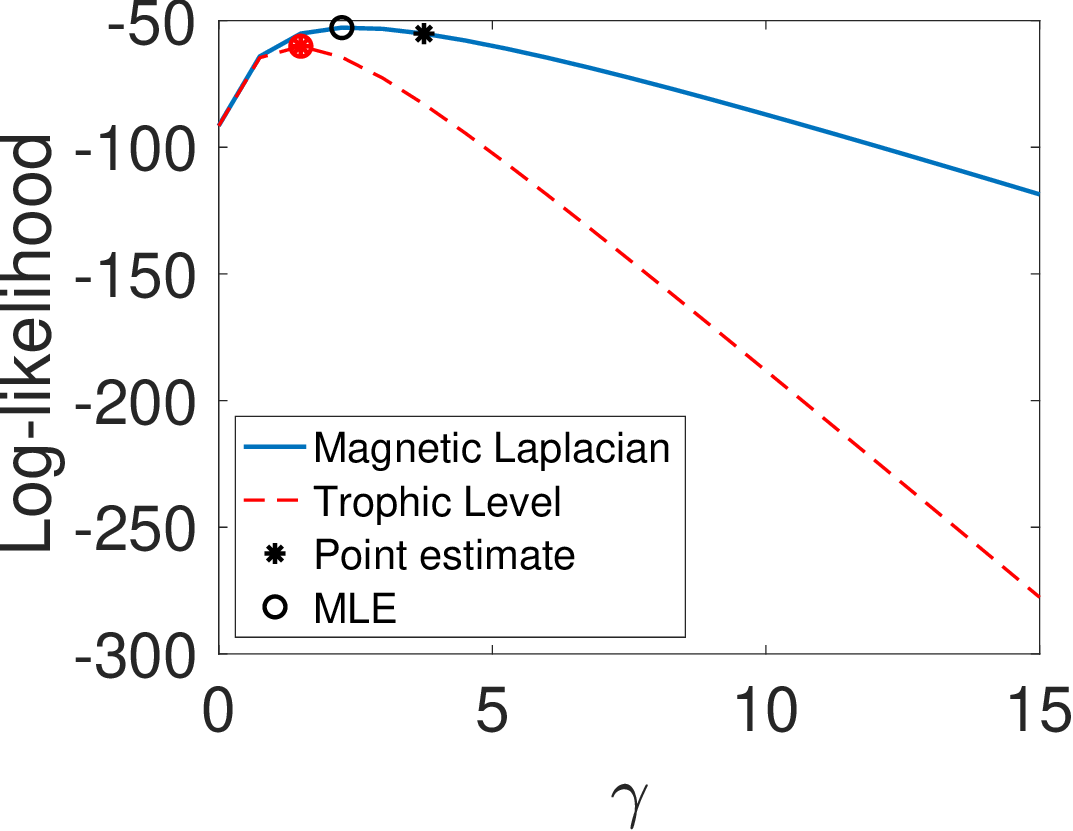}
\caption{Model comparison}
\label{fig:model_comparison_food_web}
\end{subfigure}
\bigskip
\begin{subfigure}{.47\linewidth}
\centering
\includegraphics[width=\textwidth]{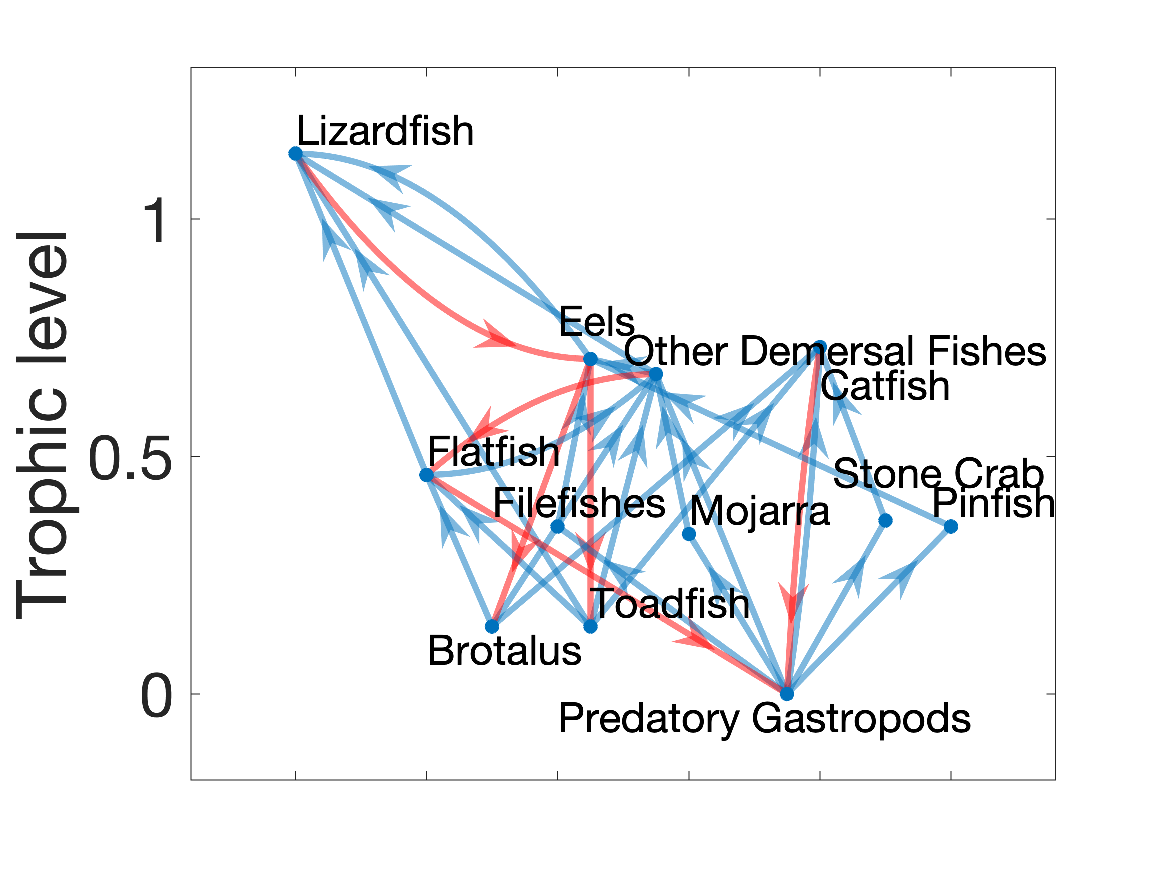}
\caption{Estimated trophic level}
\label{fig:trophic_graph_food_web}
\end{subfigure}
    \hfill
\begin{subfigure}{.47\linewidth}
\centering
\includegraphics[width=\textwidth]{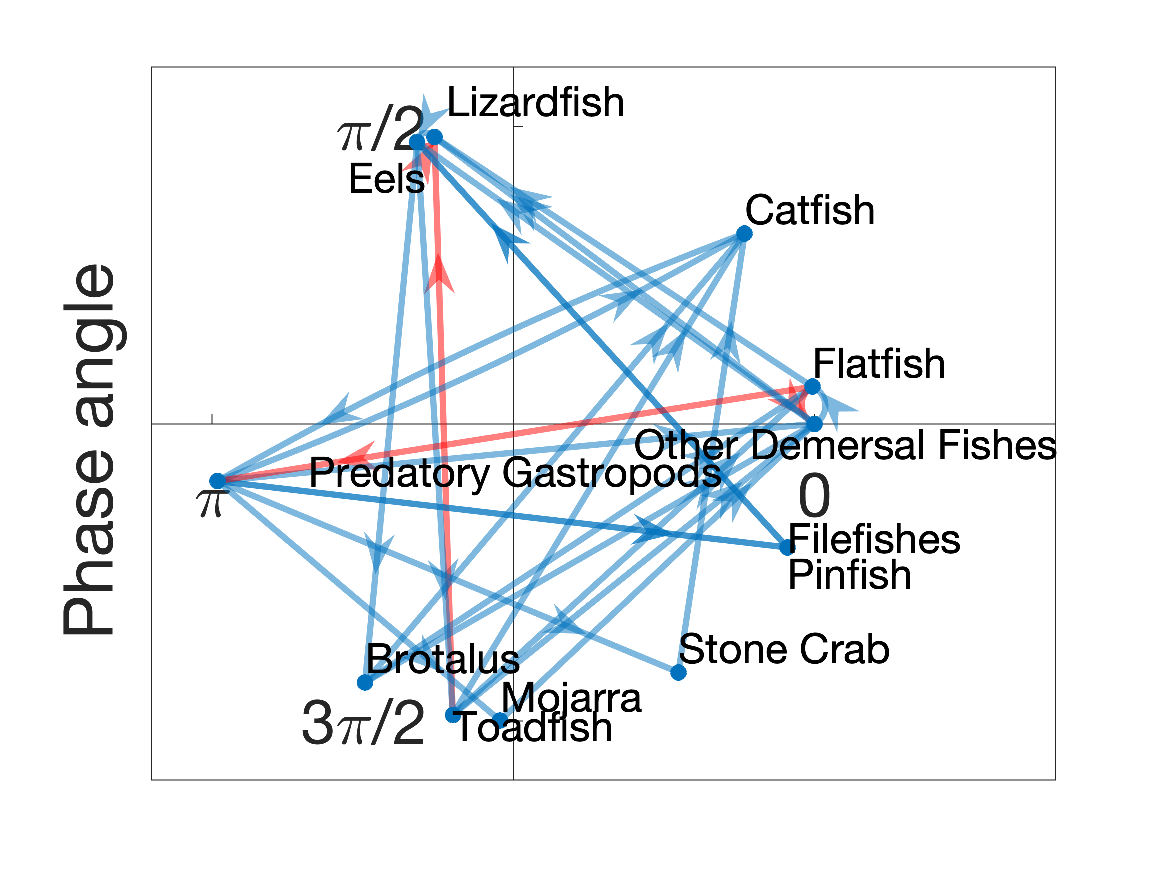}
\caption{Magnetic Eigenmap}
\label{fig:meigenmap_food_web}
\end{subfigure}
\bigskip
\begin{subfigure}{.47\linewidth}
\centering
\includegraphics[width=0.85\textwidth]{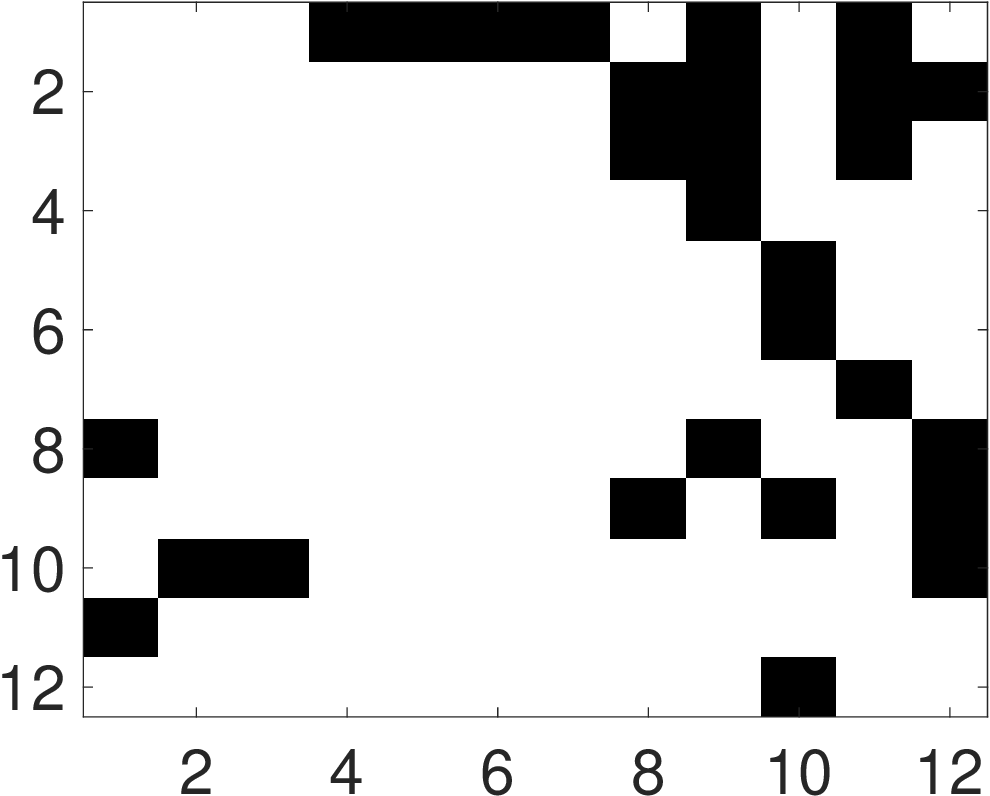}
\caption{Trophic Laplacian reordering}
\label{fig:trophic_ord_food_web}
\end{subfigure}
    \hfill
\begin{subfigure}{.47\linewidth}
\centering
\includegraphics[width=0.85\textwidth]{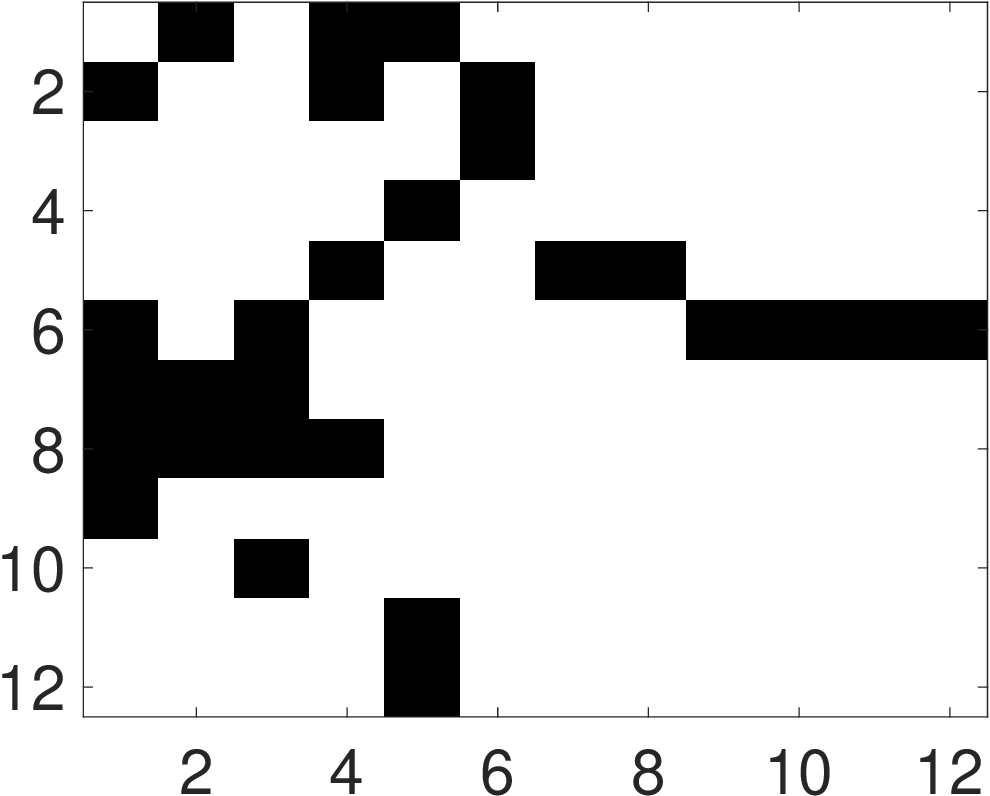}
\caption{Magnetic Laplacian reordering}
\label{fig:mag_ord_food_web}
\end{subfigure}
        \caption{Results for the Florida Bay food web}

\end{figure}

\subsection{Influence Matrix}

The influence matrix we study quantifies the influence of selected system factors in the Motueka Catchment of New Zealand  \cite{cole2006influence}. The original influence matrix consists of integer scores between 0 and 5, measuring to what extent the row factors influence the column factors, where a bigger value represents a stronger impact. 
The system factors and influence scores were developed by pooling the views of local residents.
To convert to an unweighted network, we binarise the weights by keeping only the edges between each factor and the factor(s) it influences most strongly. We then select the largest strongly connected component, which comprises 14 nodes and 35 edges.

The optimal parameter for the Magnetic Laplacian is $g=1/4$ (Figure \ref{fig:opt_g_imatrix}). The mapping from the Magnetic Laplacian has a higher maximum likelihood than the Trophic Laplacian mapping, indicating a more periodic structure (Figure \ref{fig:model_comparison_imatrix}).  The Trophic Laplacian mapping in Figure \ref{fig:trophic_graph_imatrix} aims to reveal a hierarchical influence structure. Here, scientific research and economic inputs are assigned lower trophic levels, suggesting that they 
are the fundamental influencers. The labour market is placed at the top, indicating that it tends to be influenced by other factors. However, there are 8 edges, highlighted in red, that point downwards, violating the directed linear structure.

On the other hand, the Magnetic Laplacian mapping in Figure \ref{fig:meigenmap_imatrix} aims to reveal four directed clusters
with phase angles of approximately $0, \pi/2, \pi, 3 \pi/2$.
 We highlight the nodes corresponding to ecological factors in red and social-economic factors in blue. The cluster near $\pi/2$ with
 6 nodes contains a combination of ecological and social-economic factors, and includes 6  reciprocal edges between ecological factors and social-economic factors. 
 \begin{HGL}
 Adjacency matrix reorderings are shown in 
 Figures~\ref{fig:troph_ord_catch}
 and \ref{fig:mag_ord_catch}. \end{HGL}
 Overall, the pattern agrees with the 
  conceptual schematic model proposed in  \cite[Figure~5]{cole2006influence}, which we have reproduced in Figure~\ref{fig:imatrix_schematic}. This model posits that 
  ecological factors exert influence on social-economic factors, which in turn influence on ecological factors, while the ecological system also influences itself.

\begin{figure}
    \centering

\begin{subfigure}{.47\linewidth}
\centering
\includegraphics[width=\textwidth]{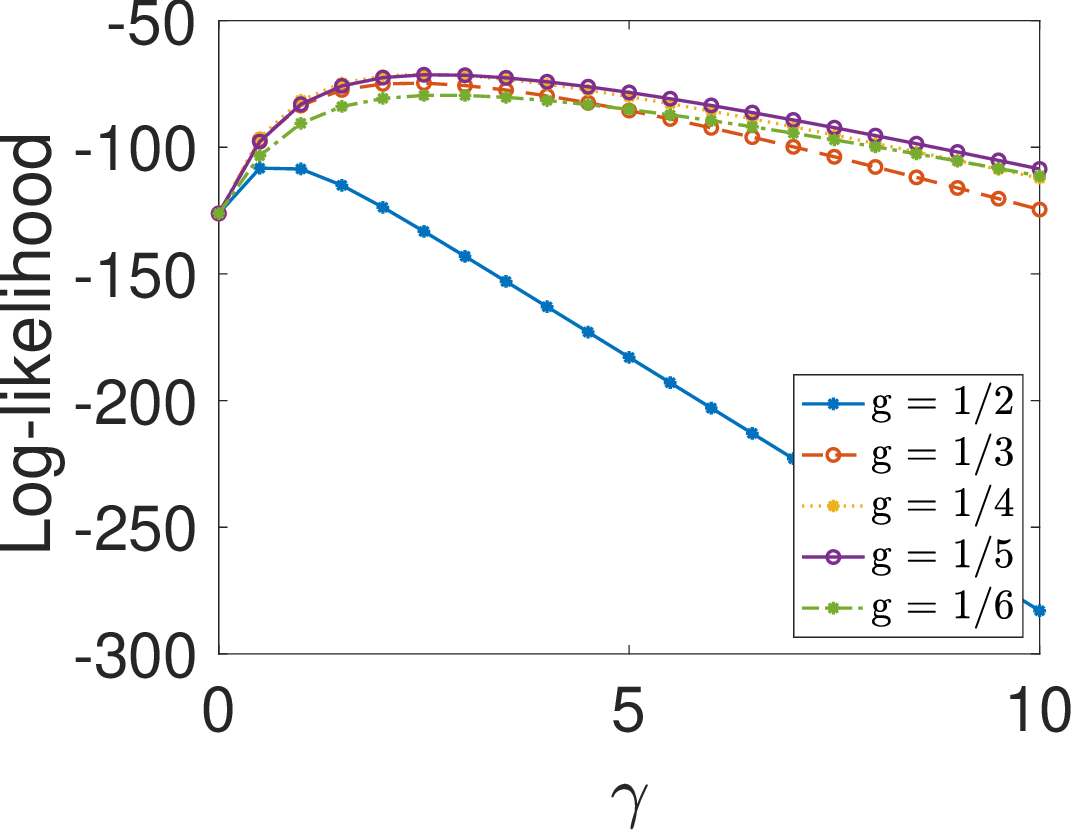}
\caption{Likelihood of directed pRDRG}
\label{fig:opt_g_imatrix}
\end{subfigure}
    \hfill
\begin{subfigure}{.47\linewidth}
\centering
\includegraphics[width=\textwidth]{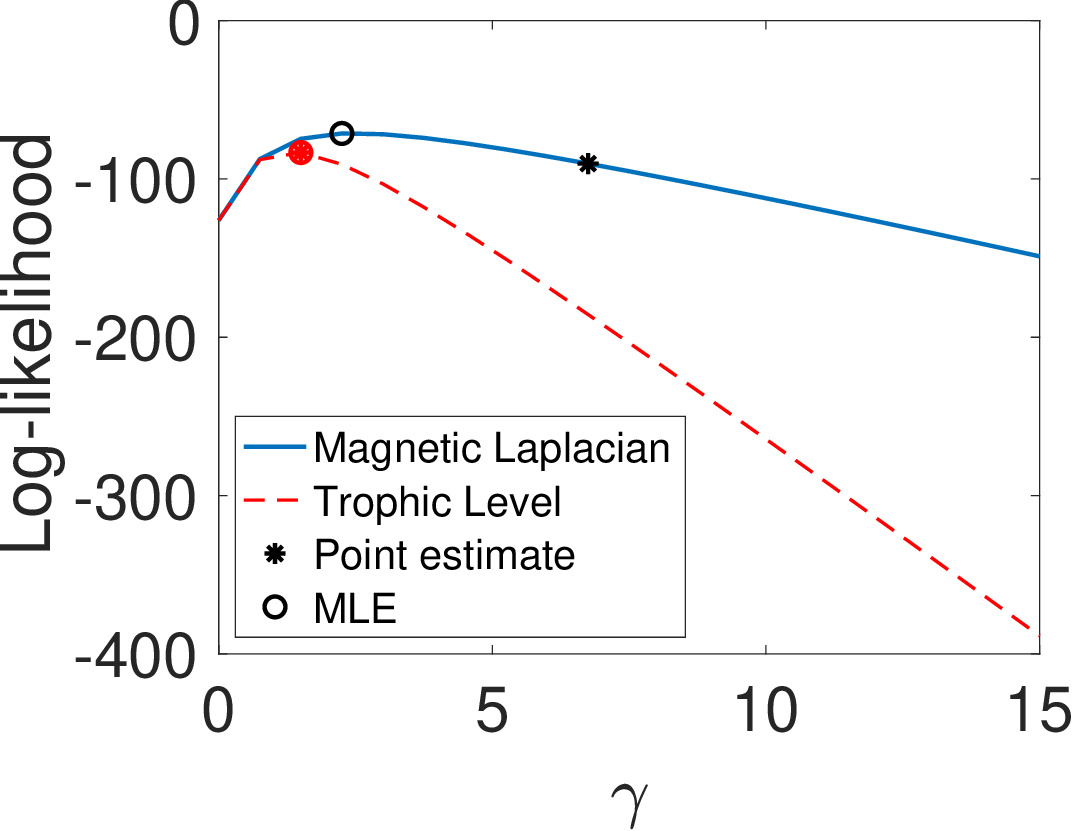}
\caption{Model comparison}
\label{fig:model_comparison_imatrix}
\end{subfigure}

\bigskip
\begin{subfigure}{.47\linewidth}
\centering
\includegraphics[width=\textwidth]{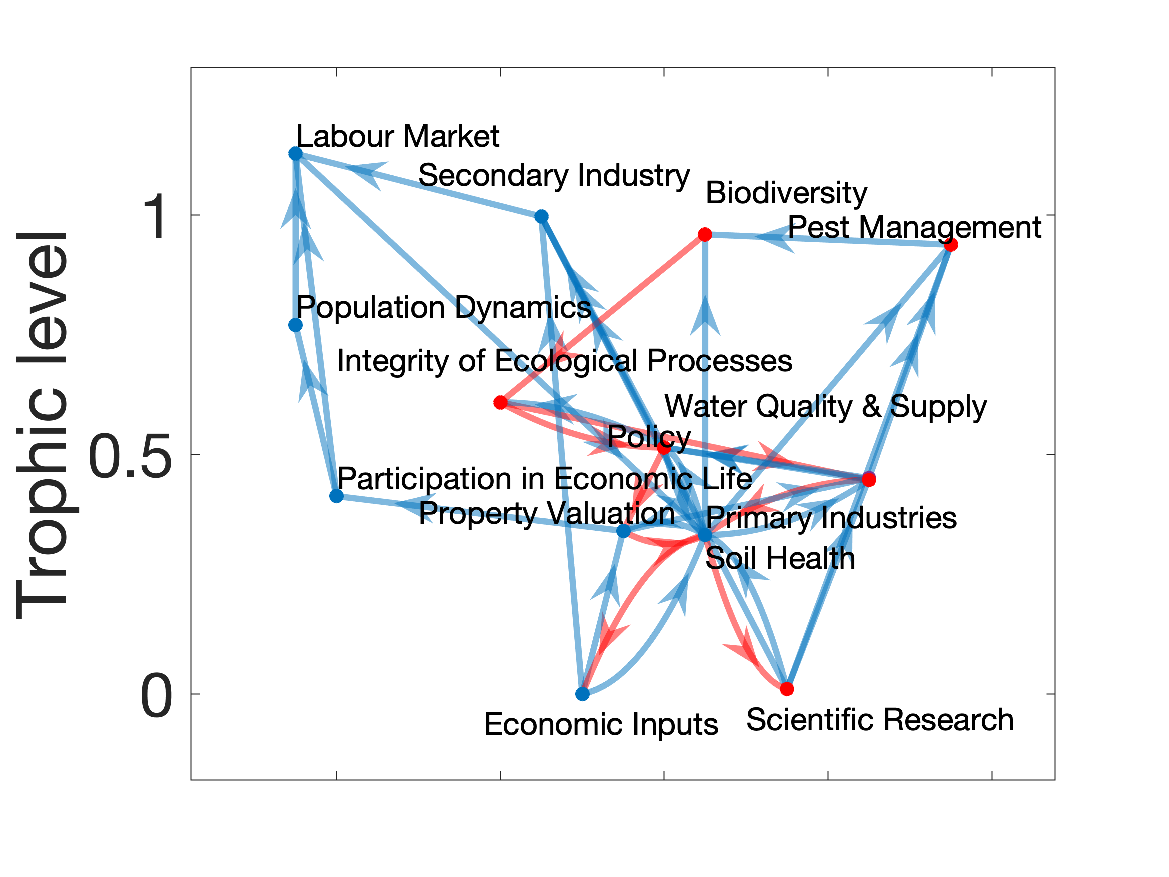}
\caption{Estimated trophic level}
\label{fig:trophic_graph_imatrix}
\end{subfigure}
    \hfill
\begin{subfigure}{.47\linewidth}
\centering
\includegraphics[width=\textwidth]{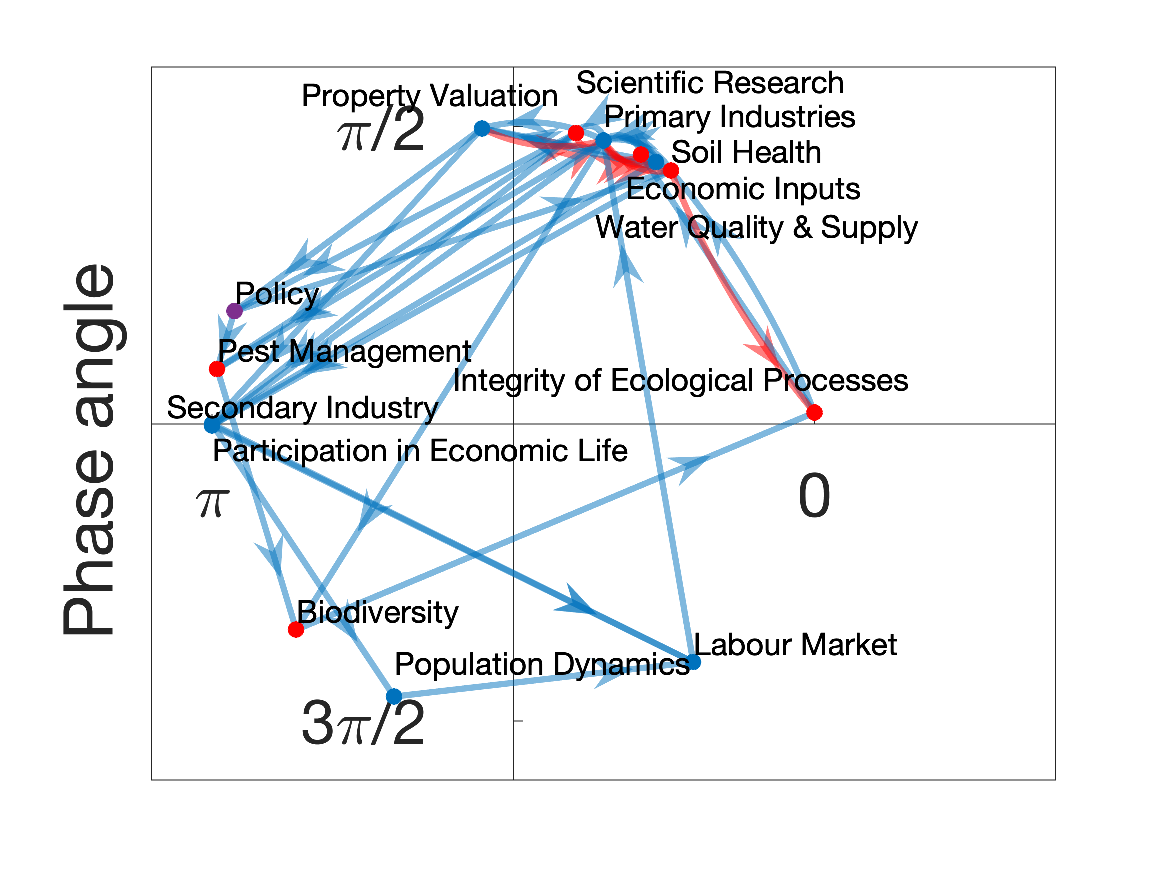}
\caption{Magnetic Eigenmap}
\label{fig:meigenmap_imatrix}
\end{subfigure}

\bigskip
\begin{subfigure}{.47\linewidth}
\centering
\includegraphics[width=0.85\textwidth]{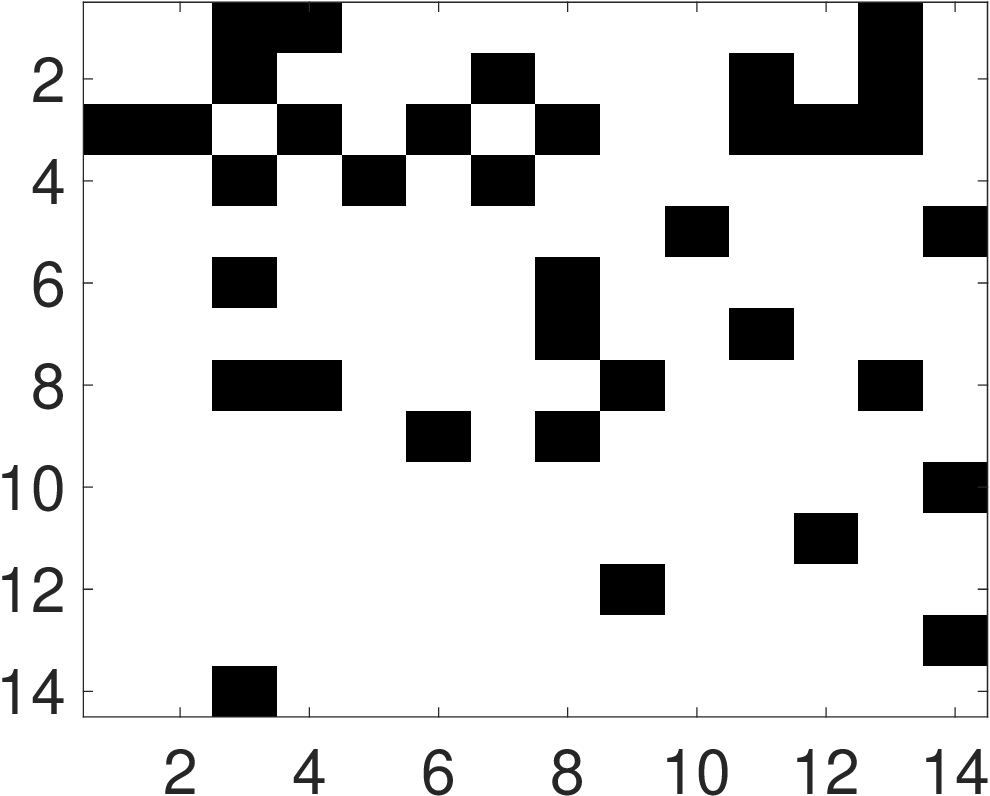}
\caption{Trophic Laplacian reordering}
\label{fig:troph_ord_catch}
\end{subfigure}
    \hfill
\begin{subfigure}{.47\linewidth}
\centering
\includegraphics[width=0.85\textwidth]{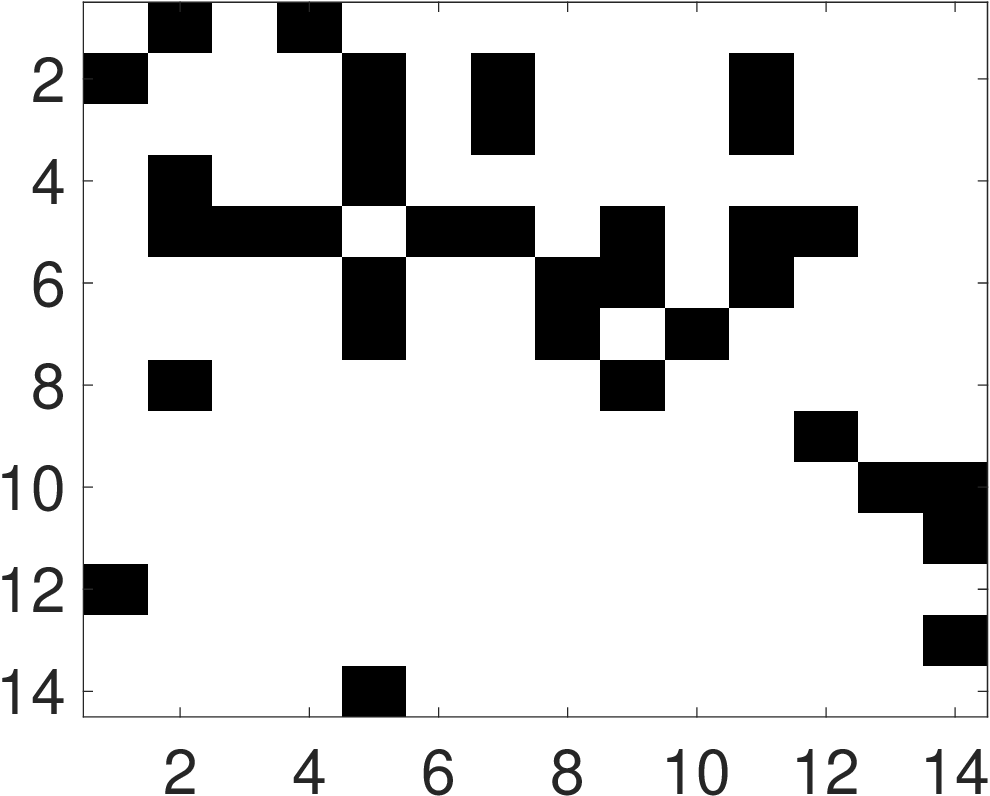}
\caption{Magnetic Laplacian reordering}
\label{fig:mag_ord_catch}
\end{subfigure}
\caption{Results for the Motueka Catchment influence matrix}
\end{figure}

%%%% draw influence matrix illustration
% Define block styles
\tikzstyle{block} = [rectangle, draw, fill=blue!20, 
    text width=5em, text centered, rounded corners, minimum height=4em]

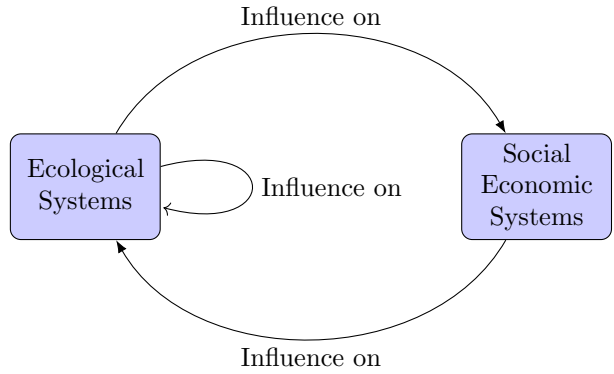
\begin{figure}
    \centering
\begin{tikzpicture}[node distance = 2cm, auto]
    % Place nodes
    \node [block] (eco) {Ecological Systems};
    \node [block, right of=eco, xshift = 4cm] (social) {Social Economic Systems};

    % Draw edges
    \path (eco) edge [bend left=60] node [anchor = south]{Influence on} (social);
    \path (social) edge [bend left=60] node [anchor = north]{Influence on} (eco);
    \path  (eco) edge [loop right] node [anchor = west]{Influence on} (eco);

\end{tikzpicture}
\caption{Influence matrix schematic graph, based on \cite[Figure~5]{cole2006influence}}
\label{fig:imatrix_schematic}
\end{figure}

\subsection{Yeast Transcriptional Regulation Network}

We now analyze a gene transcriptional regulation network\footnote{\url{http://snap.stanford.edu/data/S-cerevisiae.html}}\cite{snapnets} for a type of yeast called \textit{S. cerevisiae} \cite{RN68}, where a node represents an operon made up of a group of genes in mRNA. An edge from operon $i$ to $j$ indicates that the transcriptional factor encoded by $j$ regulates $i$. The original network is directed and signed, with signs indicating activation and deactivation. Here we ignore the signs and only consider the connectivity pattern. Since the largest strongly connected component has very few nodes, we take the largest weakly connected component, which comprises 664 nodes and 1078 edges. 

This is a very sparse network and consequently the log-likelihood of the directed pRDRG (Figure \ref{fig:opt_g_yeast}) keeps increasing as a function of the decay rate parameter $\gamma$ in the range we tested. We select $g=1/3$ as the optimal parameter for the Magnetic Laplacian, and compare the log-likelihood of two models in Figure \ref{fig:model_comparison_yeast}. This time the trophic version achieves a higher maximum likelihood, 
favouring a linear structure. 

\begin{figure}
    \centering
    
\begin{subfigure}{.47\linewidth}
\centering
\includegraphics[width=\textwidth]{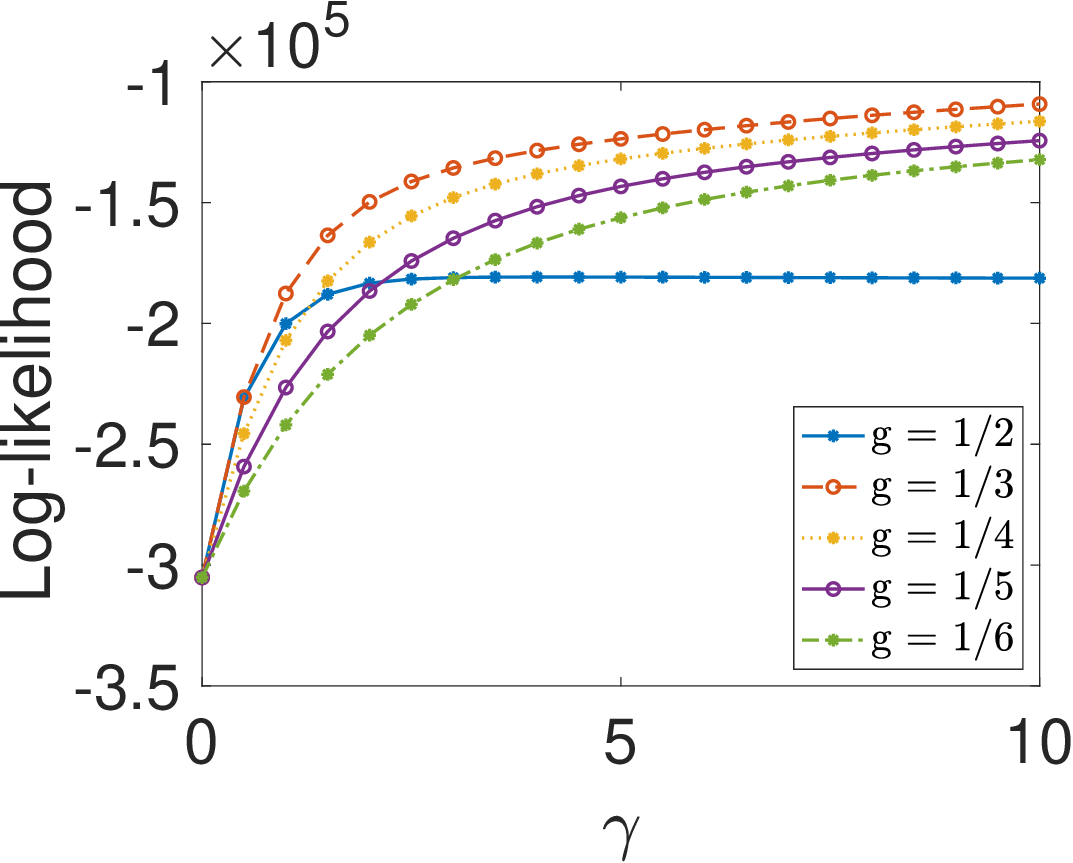}
\caption{Likelihood of directed pRDRG}
\label{fig:opt_g_yeast}
\end{subfigure}
    \hfill
\begin{subfigure}{.47\linewidth}
\centering
\includegraphics[width=\textwidth]{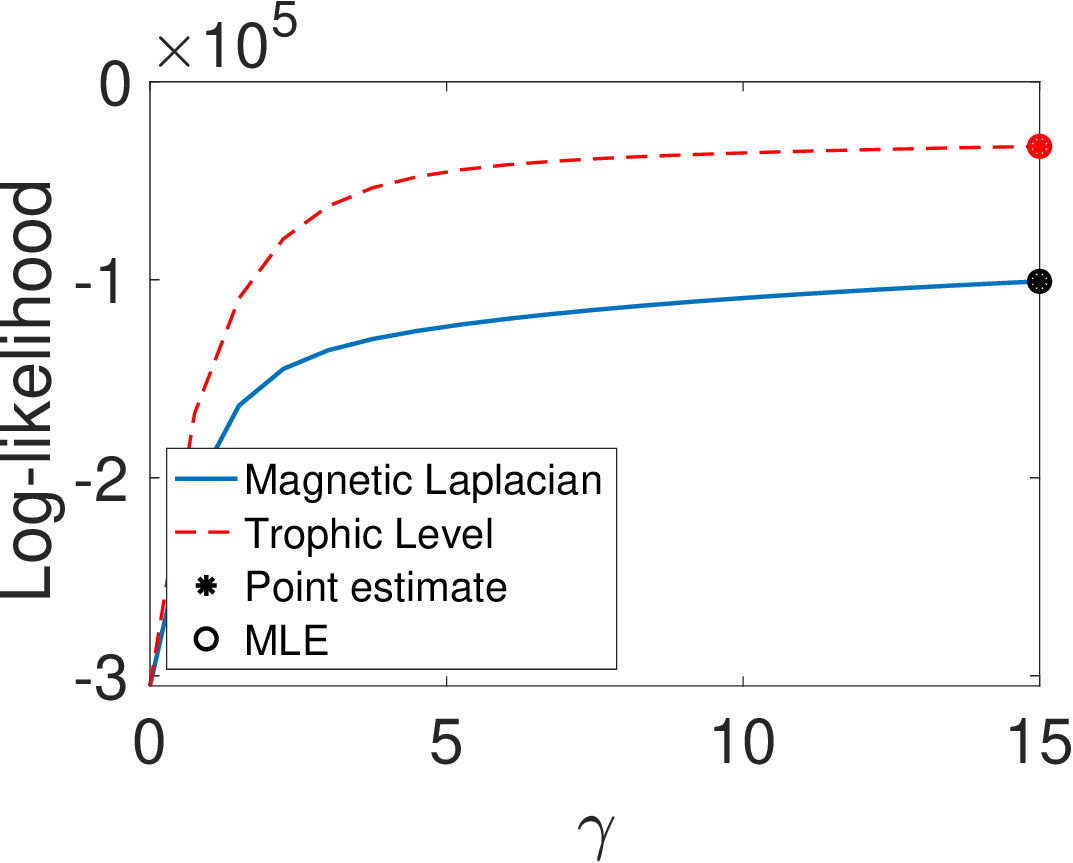}
\caption{Model comparison}
\label{fig:model_comparison_yeast}
\end{subfigure}
\caption{Results for a yeast transcriptional regulation network}
\end{figure}

\subsection{C. elegans Frontal Neural Network}
C. elegans is the only organism whose neural network has been fully mapped. The neural network of C. elegans\footnote{\url{http://snap.stanford.edu/data/C-elegans-frontal.html}}\cite{snapnets} is unweighted and directed, representing  
connections between neurons and synapses \cite{kaiser2006nonoptimal}. We investigate its largest strongly connected component with 109 nodes and 637 edges. 
The optimal value for the parameter $g$ among the test points is $g=1/5$ (Figure \ref{fig:opt_g_Celegans}). The Trophic Laplacian 
algorithm achieves a higher maximum likelihood
than the Magnetic Laplacian algorithm using (Figure \ref{fig:model_comparison_Celegans}).  
This preference for a linear directed structure is consistent with the 
tube-like shape of the organism \cite{alma993700483502466}.

\begin{figure}
\centering
\begin{subfigure}{.47\linewidth}
\centering
\includegraphics[width=\textwidth]{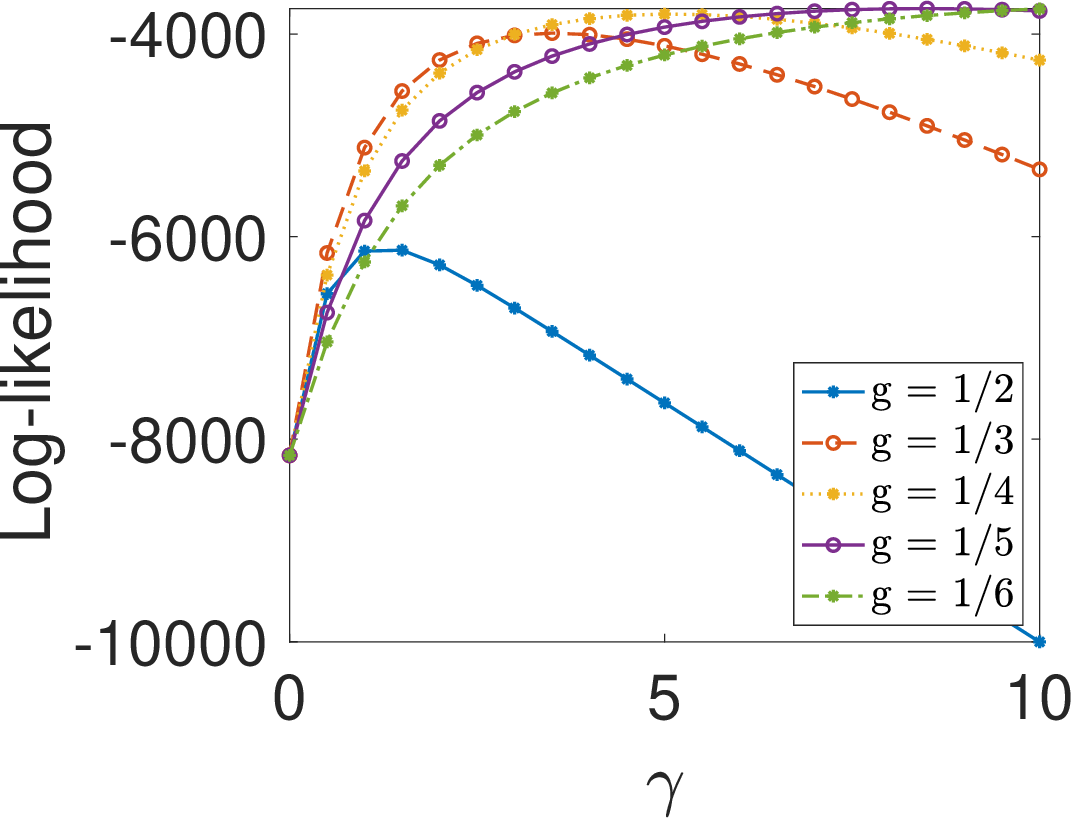}
\caption{Likelihood of directed pRDRG}
\label{fig:opt_g_Celegans}
\end{subfigure}
\hfill
\begin{subfigure}{.47\linewidth}
\centering
\includegraphics[width=\textwidth]{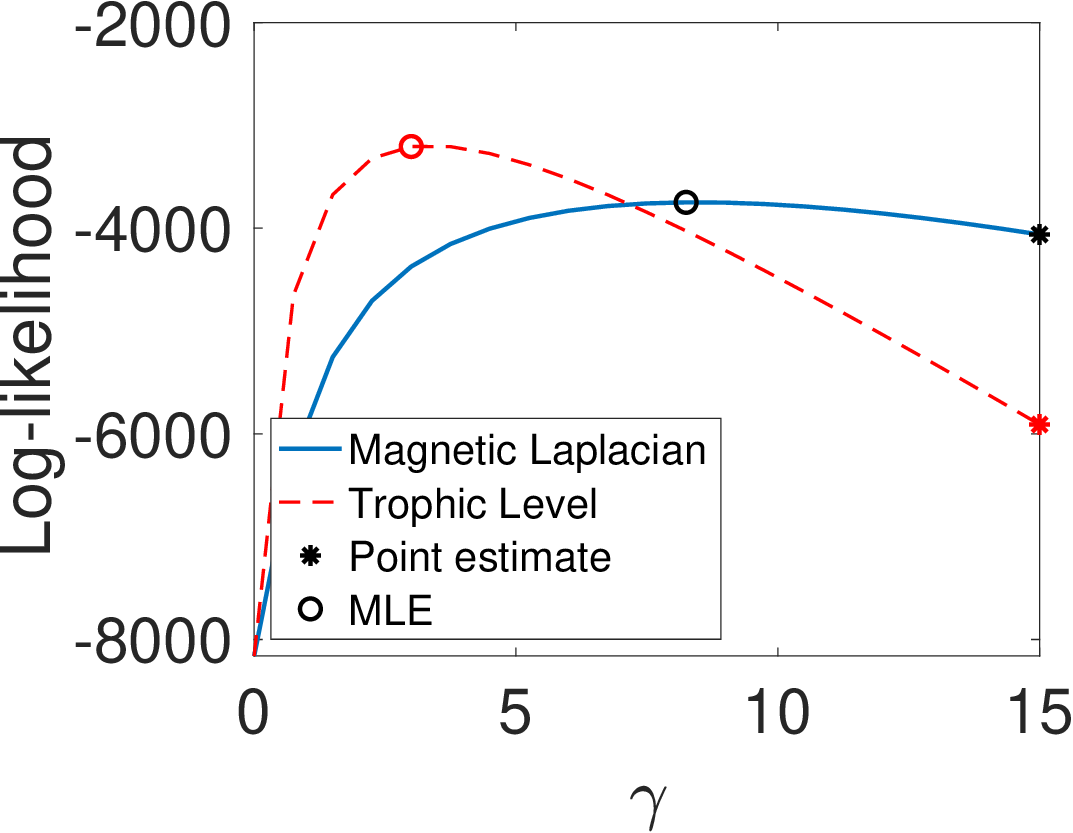}
\caption{Model comparison}
\label{fig:model_comparison_Celegans}
\end{subfigure}
    \caption{C. elegans frontal neural network}
\end{figure}

\subsection{Other Real Networks}
\label{subsec:tables}
A summary of further real-world network comparisons is given in 
Table \ref{tab:data_summary}. In the Data set column, 
we use $(s)$ and $(w)$ to indicate whether the largest \emph{strongly} or \emph{weakly} connected component is analysed,
respectively. The fourth column specifies the optimal parameter $g$ for the Magnetic Laplacian determined through grid search among the test points $g = 1/2, 1/3, 1/4, 1/5, 1/6$. The decay parameter $\gamma$ used for the grid search ranges from $0$ to $20$ with a step size of $0.5$. The last column shows the logarithm of the ratio between the maximum likelihoods of the directed pRDRG and trophic models. 
Hence,
periodic/linear structure is seen to be favoured
for the networks in 
the first 
8 rows/last 7 rows.

\begin{table}
\centering
\begin{tabular}{ c c c c c} 
\hline
Data set & Nodes & Edges & $g$ & $\ln(P_{pRDRG}/P_{Trophic})$ \\ 
\hline
Directed pRDRG (s)  & 500 & 49277 & 1/5  &   5.99e+04\\ 
Food web (s) \cite{ulanowicz2005network}  & 12 & 28 & 1/3  &   1.17e+01\\ 
Influence matrix (s) \cite{cole2006influence}& 14 & 35 & 1/4  &  1.72e+01\\ 
US migration (s)\tablefootnote{\url{https://www.census.gov/content/census/en/library/publications/2003/dec/censr-8.html}} & 51 & 729 & 1/6 &  5.03e+02\\ 
US IO (s)\tablefootnote{\url{https://stats.oecd.org/Index.aspx?DataSetCode=IOTSI4_2018}} & 31 & 299 & 1/6 &  5.67e+01\\ 
Trade (s)\tablefootnote{\url{http://www.economicswebinstitute.org/worldtrade.htm}} & 17 & 85 & 1/6 &  2.02e+01\\ 
Transportation (s)\tablefootnote{\url{http://snap.stanford.edu/data/reachability.html}} \cite{snapnets,Frey972}& 456 & 71959 & 1/6 &  4.66e+04\\ 
Flight (s)\tablefootnote{\url{https://www.visualizing.org/global-flights-network/}} & 227 & 23113 & 1/6  &  7.22e+03\\ 
\hline
Trophic level graph (w)  & 500 & 19956 & 1/6 &   -1.63e+04\\ 
C. elegans (s) \cite{kaiser2006nonoptimal} & 109 & 637 & 1/6  & -4.74e+02\\ 
Yeast (w) \cite{RN68} & 664 & 1078 & 1/3 & -6.46e+04\\ 
Political blog (s)\tablefootnote{\label{Newman} \url{http://www-personal.umich.edu/~mejn/netdata/}} \cite{adamic2005political} & 793 & 15781 & 1/5  & -3.42e+04\\ 
Shopping basket (w)\tablefootnote{\url{https://www.dunnhumby.com/source-files/}} & 27 & 84 & 1/6 & -1.35e+02\\ 
Venue reopen (w)\cite{BenzellSethG2020Rscd} & 13 & 19 & 1/6 & -1.82e+01\\ 
Word adjacency (w)$^{\ref{Newman}}$\cite{NewmanM.E.J2006Fcsi} & 112 & 425 & 1/6 & -8.21e+02\\ 
\hline
\end{tabular}
\caption{\label{tab:data_summary} Comparison summary statistics.
Periodic (linear) directed structure is 
found to be preferred for 
networks in the first 8 (last 7) rows.}
\end{table}

%%%%%%%%%%%%%%%%%%%%%%%%%%%%%%%
\section{Discussion}
\label{sec:discussion}

Spectral methods can be used to extract structures from directed networks, allowing us to detect clusters, rank nodes, and visualize patterns. 
This work exploited a natural connection between spectral methods for directed networks and generative random graph models. We showed that the Magnetic Laplacian 
and 
Tropic Laplacian can each be 
associated with a range-dependent random graph.
In the Magnetic Laplacian case, the new random graph model has the interesting property that 
the probabilities of $i \to j$ and $j \to i$ 
connections are not independent.
Our theoretical analysis provided a workflow for quantifying the relative strength of periodic versus linear directed hierarchy, using a likelihood ratio, adding
value to the standard approach of visualizing a new graph layout or 
reordering the adjacency matrix.
 
We demonstrated the model comparison workflow on synthetic networks, and also showed examples where real networks were categorized as more linear or periodic.
The results illustrate the potential for the approach to reveal interesting patterns in networks from ecology, biology, social sciences and other related fields. 

There are several promising directions for related future work.
It would be of interest to use the likelihood ratios to compare this network feature across a well-defined category in order to address questions such as
``are results between top chess players more or less periodic than 
results between top tennis players?'' and 
``does an organism that is more advanced in an evolutionary
sense have more periodic connectivity in the brain?''
An extension of the comparison tool 
to weighted networks should also be possible; here there are
notable, and perhaps application-specific, issues about how to 
generalize
and interpret the Magnetic Laplacian.
Also, the comparison could be extended to include other 
types of structure, including stochastic block and core-periphery
versions \cite{tudisco2018core}.
This introduces further challenges of
(a) accounting for different numbers of model parameters, and (b)
dealing with nonlinear spectral methods. \begin{HGL}
Further, by introducing an appropriate null model it may be possible to 
quantify the presence of 
linear or periodic hierarchies in absolute, rather than relative, terms. \end{HGL}

\section*{Data, code and materials}
This research made use of public domain data that is available over the internet, as indicated in the text. Code for the experiments is available at \url{https://github.com/OpalGX/Directed-Network-Laplacians}.

\section*{Competing interests}
The authors declare that there is no conflict of interest. 

\section*{Authors' contributions}
X.G. carried out the numerical experiments and drafted the manuscript. All authors contributed to the theoretical research, the design of numerical experiments, and the completion of the manuscript. All authors have read and approved the manuscript and gave final approval for publication.

\section*{Acknowledgements}
The authors thank Colin Singleton from the CountingLab for suggesting the Dunnhumby data used in Table~\ref{tab:data_summary} and providing advice on data analysis.

\section*{Funding}
X.G. acknowledges support of MAC-MIGS CDT Scholarship under EPSRC grant EP/S023291/1. D.J.H. was supported by EPSRC Programme Grant EP/P020720/1.

\bibliographystyle{vancouver}
%\bibliography{references}

\begin{thebibliography}{10}

\bibitem{LuxburgUlrike2007Atos}
Luxburg U.
\newblock A tutorial on spectral clustering.
\newblock Statistics and Computing. 2007;17(4):395--416.

\bibitem{strang2019linear}
Strang G.
\newblock Linear Algebra and Learning from Data.
\newblock Wellesley-Cambridge Press; 2019.

\bibitem{BGL16}
Benson AR, Gleich DF, Leskovec J.
\newblock Higher-order organization of complex networks.
\newblock Science. 2016:163--166.

\bibitem{CucuringuMihai2019Hmfc}
Cucuringu M, Li H, Sun H, Zanetti L.
\newblock Hermitian matrices for clustering directed graphs: insights and
  applications.
\newblock In: International Conference on Artificial Intelligence and
  Statistics. PMLR; 2020. p. 983--992.

\bibitem{FanuelMichael2017Mefc}
Fanuel M, Alaíz CM, Suykens JAK.
\newblock Magnetic Eigenmaps for community detection in directed networks.
\newblock Physical Review E. 2017;95(2):022302--022302.

\bibitem{MacKayR.S2020Hdia}
Mackay RS, Johnson S, Sansom B.
\newblock How directed is a directed network?
\newblock Royal Society Open Science. 2020;7(9):201138--201138.

\bibitem{MALLIAROS201395}
Malliaros FD, Vazirgiannis M.
\newblock Clustering and community detection in directed networks: A survey.
\newblock Physics Reports. 2013;533(4):95--142.

\bibitem{SJM21}
Sansom B, Johnson S, MacKay RS.
\newblock Trophic incoherence drives systemic risk in financial exposure
  networks.
\newblock Westminster, London: National Institute of Economic and Social
  Research; 2021. Working Paper Number 39.

\bibitem{CAK21}
Chhimwal M, Agrawal S, Kumar G.
\newblock Measuring circular supply chain risk: {A} {B}ayesian network
  methodology.
\newblock Sustainability. 2021;13:8448.

\bibitem{Jasny2019}
Jasny L, Fisher DR.
\newblock Echo chambers in climate science.
\newblock Environmental Research Communications. 2019;1(10):101003.

\bibitem{FanuelMichael2018MEft}
Fanuel M, Alaiz CM, Fernandez A, Suykens JAK.
\newblock Magnetic Eigenmaps for the visualization of directed networks.
\newblock Applied and Computational Harmonic Analysis. 2018;44(1):189--199.

\bibitem{HighamDesmondJ2003Uswn}
Higham DJ.
\newblock Unravelling small world networks.
\newblock Journal of Computational and Applied Mathematics. 2003;158(1):61--74.

\bibitem{GrindrodPeter2002Rrga}
Grindrod P.
\newblock Range-dependent random graphs and their application to modeling large
  small-world Proteome datasets.
\newblock Physical Review E. 2002;66(6):066702/7--066702.

\bibitem{GrindrodPeter2010Pr}
Grindrod P, Higham DJ, Kalna G.
\newblock Periodic reordering.
\newblock IMA Journal of Numerical Analysis. 2010;30(1):195--207.

\bibitem{ChungFanR.K.1997Sgt}
Chung F.
\newblock Spectral Graph Theory.
\newblock Regional conference series in mathematics; no. 92. Providence, R.I.:
  American Mathematical Society; 1997.

\bibitem{HighamDesmondJ2007bio}
Higham DJ.
\newblock Spectral clustering and its use in bioinformatics.
\newblock Journal of Computational and Applied Mathematics. 2007;204:25--37.

\bibitem{palmer2020spectral}
Palmer WR, Zheng T.
\newblock Spectral clustering for directed networks.
\newblock In: International Conference on Complex Networks and Their
  Applications. Springer; 2020. p. 87--99.

\bibitem{ChungFan2005LatC}
Chung F.
\newblock Laplacians and the Cheeger inequality for directed graphs.
\newblock Annals of Combinatorics. 2005;9(1):1--19.

\bibitem{cucuringu2020extension}
Cucuringu M, Tyagi H.
\newblock An extension of the angular synchronization problem to the
  heterogeneous setting.
\newblock arXiv preprint arXiv:201214932. 2020.

\bibitem{SingerA2011Asbe}
Singer A.
\newblock Angular synchronization by eigenvectors and semidefinite programming.
\newblock Applied and Computational Harmonic Analysis. 2011;30(1):20--36.

\bibitem{LutkepohlHelmut1996Hom}
L\"{u}tkepohl H.
\newblock Handbook of Matrices.
\newblock Chichester: Wiley; 1996.

\bibitem{MacKay20}
MacKay RS.
\newblock Incomplete pairwise comparison.
\newblock Mathematics Today. 2020;132.

\bibitem{Gleich15}
Gleich DF.
\newblock {PageRank} beyond the Web.
\newblock SIAM Review. 2015;57(3):321--363.

\bibitem{johnson2020digraphs}
Johnson S.
\newblock Digraphs are different: Why directionality matters in complex
  systems.
\newblock Journal of Physics: Complexity. 2020;1:015003.

\bibitem{Levine80}
Levine S.
\newblock Several measures of trophic structure applicable to complex food
  webs.
\newblock J Theoretical Biology. 1980;83:195--207.

\bibitem{moutsinas2020graph}
Moutsinas G, Shuaib C, Guo W, Jarvis S.
\newblock Graph Hierarchy: A novel approach to understanding hierarchical
  structures in complex networks.
\newblock arXiv: Physics and Society. 2019.

\bibitem{Kl2000}
Kleinberg JM.
\newblock Navigation in a small world.
\newblock Nature. 2000;406:845.

\bibitem{deserno2004generate}
Deserno M.
\newblock How to generate equidistributed points on the surface of a sphere;
  2004.
\newblock Unpublished.

\bibitem{snapnets}
Leskovec J, Krevl A. {SNAP Datasets}: {Stanford} Large Network Dataset
  Collection; 2014.
\newblock \url{http://snap.stanford.edu/data}.

\bibitem{ulanowicz2005network}
Ulanowicz RE, DeAngelis DL.
\newblock Network analysis of trophic dynamics in {S}outh {F}lorida ecosystems.
\newblock US Geological Survey Program on the South Florida Ecosystem.
  2005;114:45.

\bibitem{cole2006influence}
Cole A.
\newblock The influence matrix methodology: A technical report.
\newblock Landcare Research Contract Report: LC0506/175. 2006.

\bibitem{RN68}
Milo R, Shen-Orr S, Itzkovitz S, Kashtan N, Chklovskii D, Alon U.
\newblock Network motifs: Simple building blocks of complex networks.
\newblock Science. 2002;298(5594):824--827.

\bibitem{kaiser2006nonoptimal}
Kaiser M, Hilgetag CC.
\newblock Nonoptimal component placement, but short processing paths, due to
  long-distance projections in neural systems.
\newblock PLoS Comput Biol. 2006;2(7):e95.

\bibitem{alma993700483502466}
Wood WB, editor.
\newblock The nematode Caenorhabditis Elegans.
\newblock Cold Spring Harbor monograph series; 17; 1988.

\bibitem{Frey972}
Frey BJ, Dueck D.
\newblock Clustering by passing messages between data points.
\newblock Science. 2007;315(5814):972--976.
\newblock Available from:
  \url{https://science.sciencemag.org/content/315/5814/972}.

\bibitem{adamic2005political}
Adamic LA, Glance N.
\newblock The political blogosphere and the 2004 {US} election: divided they
  blog.
\newblock In: Proceedings of the 3rd International Workshop on Link Discovery;
  2005. p. 36--43.

\bibitem{BenzellSethG2020Rscd}
Benzell SG, Collis A, Nicolaides C.
\newblock Rationing social contact during the {COVID-19} pandemic: Transmission
  risk and social benefits of {US} locations.
\newblock Proceedings of the National Academy of Sciences.
  2020;117(26):14642--14644.
\newblock Available from: \url{https://www.pnas.org/content/117/26/14642}.

\bibitem{NewmanM.E.J2006Fcsi}
Newman MEJ.
\newblock Finding community structure in networks using the eigenvectors of
  matrices.
\newblock Physical review E, Statistical, nonlinear, and soft matter physics.
  2006;74(3):036104--036104.

\bibitem{tudisco2018core}
Tudisco F, Higham DJ.
\newblock A nonlinear spectral method for core-periphery detection in networks.
\newblock SIAM J Mathematics of Data Science. 2019;1:269--292.

\end{thebibliography}

\end{document}